\definecolor{penndarkestblue}{cmyk}{1,0.74,0,0.77}
\definecolor{penndarkerblue}{cmyk}{1,0.74,0,0.70}
\definecolor{pennblue}{cmyk}{0.99,0.66,0,0.57} 
\definecolor{pennlighterblue}{cmyk}{0.98,0.44,0,0.35}
\definecolor{pennlightestblue}{cmyk}{0.38,0.17,0,0.17} 
\definecolor{penndarkestred}{cmyk}{0,1,0.89,0.66}
\definecolor{penndarkerred}{cmyk}{0,1,0.88,0.55}
\definecolor{pennred}{cmyk}{0,1,0.83,0.42} 
\definecolor{pennlighterred}{cmyk}{0,1,0.6,0.24}
\definecolor{pennlightestred}{cmyk}{0,0.43,0.26,0.12} 
\definecolor{penndarkestgreen}{cmyk}{1,0,1,0.68}
\definecolor{penndarkergreen}{cmyk}{1,0,1,0.57}
\definecolor{penngreen}{cmyk}{1,0,1,0.44} 
\definecolor{pennlightergreen}{cmyk}{1,0,1,0.25}
\definecolor{pennlightestgreen}{cmyk}{0.43,0,0.43,0.13}
\definecolor{penndarkestorange}{cmyk}{0,0.65,1,0.49}
\definecolor{penndarkerorange}{cmyk}{0,0.65,1,0.33}
\definecolor{pennorange}{cmyk}{0,0.54,1,0.24} 
\definecolor{pennlighterorange}{cmyk}{0,0.32,1,0.13}
\definecolor{pennlightestorange}{cmyk}{0,0.15,0.46,0.06}
\definecolor{penndarkestpurple}{cmyk}{0,1,0.11,0.86}
\definecolor{penndarkerpurple}{cmyk}{0,1,0.13,0.82}
\definecolor{pennpurple}{cmyk}{0,1,0.11,0.71} 
\definecolor{pennlighterpurple}{cmyk}{0,1,0.05,0.46}
\definecolor{pennlightestpurple}{cmyk}{0,0.35,0.02,0.23}
\definecolor{pennyellow}{cmyk}{0,0.20,1,0.05} 
\definecolor{pennlightgray1}{cmyk}{0,0,0,0.05}
\definecolor{pennlightgray3}{cmyk}{0.01,0.01,0,0.18}
\definecolor{pennmediumgray1}{cmyk}{0.04,0.03,0,0.31}
\definecolor{pennmediumgray4}{cmyk}{0.08,0.06,0,0.54}
\definecolor{penndarkgray2}{cmyk}{0.09,0.07,0,0.71}
\definecolor{penndarkgray4}{cmyk}{0.1,0.1,0,0.92}
\def\SO3{\mathrm{SO(3)}}
\newtheorem{assumption}{\hspace{0pt}\bf Assumption \hspace{-0.15cm}}
\newtheorem{proposition}{\hspace{0pt}\bf Proposition}
\newtheorem{theorem}{\hspace{0pt}\bf Theorem}
\newtheorem{corollary}{\hspace{0pt}\bf Corollary}
\newtheorem{remark}{\hspace{0pt}\bf Remark}
\begin{document}

\title{Balancing Rates and Variance via Adaptive Batch-Size for Stochastic Optimization Problems
}
\author{Zhan Gao$^\star$, Alec Koppel$^\dagger$, and Alejandro Ribeiro$^\star$ 
\thanks{The work in this paper was supported by ARL DCIST CRA W911NF-17-2-0181 and NSF HDR TRIPODS 1934960. Preliminary results appear in ICASSP 2020 conference \cite{cGao2020}. $^\star$Department of Electrical and Systems Engineering, University of Pennsylvania, Philadelphia, PA (Email: \{gaozhan,aribeiro\}@seas.upenn.edu). $^\dagger$Computational and Information Sciences Directorate, U.S. Army Research Laboratory, 2800 Powder Mill Rd., Adelphi, MD 20783 (alec.e.koppel.civ@mail.mil).}

}

\maketitle

\begin{abstract}
Stochastic gradient descent is a canonical tool for addressing stochastic optimization problems, and forms the bedrock of modern machine learning and statistics. In this work, we seek to balance the fact that attenuating step-size is required for exact asymptotic convergence with the fact that constant step-size learns faster in finite time up to an error. To do so, rather than fixing the mini-batch and the step-size at the outset, we propose a strategy to allow parameters to evolve adaptively. Specifically, the batch-size is set to be a piecewise-constant increasing sequence where the increase occurs when a suitable error criterion is satisfied. Moreover, the step-size is selected as that which yields the fastest convergence. The overall algorithm, two scale adaptive (TSA) scheme, is developed for both convex and non-convex stochastic optimization problems. It inherits the exact asymptotic convergence of stochastic gradient method. More importantly, the optimal error decreasing rate is achieved theoretically, as well as an overall reduction in computational cost. Experimentally, we observe that TSA attains a favorable tradeoff relative to standard SGD that fixes the mini-batch and the step-size, or simply allowing one to increase or decrease respectively. 

%

\end{abstract}

\begin{IEEEkeywords}
Stochastic optimization, stochastic gradient descent, adaptive batch-size, optimal step-size
\end{IEEEkeywords}

\IEEEpeerreviewmaketitle


\section{Introduction} \label{sec:intro}

Many machine learning \cite{Kingma2015}, control \cite{Pham2009}, and signal processing tasks \cite{Pereyra2015, solo1994adaptive, sayed2003fundamentals} may be formulated as finding the minimizer of an expected cost parameterized by a random variable that encodes data. In particular, communications channel estimation \cite{ribeiro2010ergodic}, learning model mismatch of a dynamical system \cite{koppel2017proximity}, and training modern vision systems \cite{bray20043d}, hinge upon solving stochastic optimization problems. Stochastic gradient descent (SGD) is a widely used approach for these problems \cite{Ruszczynski1983, mathews1993stochastic, erdogmus2006linear, zhang2004solving, bottou2010large, kolodziej2008constrained, amiri2020machine}, but practically its parameter tuning can be difficult. This is because attenuating step-size is required for exact solutions, which slows learning to a stand still. Moreover, mini-batch size is typically fixed at the outset of training, which reduces variance by a set amount \cite{bouleau1993numerical}. In this work, we balance the choice of step and mini-batch sizes to learn quickly with ever-decreasing variance by allowing batch-size to enlarge adaptively.

To contextualize our approach, we begin by noting that gradient iteration yields an effective approach for both convex and nonconvex minimization problems \cite{Cauchy1847, Nesterov2004, chartrand2007exact, ruder2016overview, zeng2018nonconvex}. However, when the objective takes the form of an expected value, evaluating the gradient direction is intractable. To surmount this issue, SGD descends along stochastic gradients in lieu of true gradients, i.e., gradients evaluated at a single (or possibly multiple) sample point(s) \cite{Robbins1951, amari1993backpropagation, song2013stochastic, mokhtari2017large}. Its simplicity and  theoretical guarantees make it appealing, but its practical usage requires parameter tuning that can be unintuitive.

Specifically, it's well-understood that attenuating step-size is theoretically required for exact convergence, at the cost of reducing the learning speed to null as time progresses \cite{Wardi1989, gardner1984learning}. Experimentally, constant step-size vastly improves performance, but only attains approximate convergence \cite{Nedic2001,dieuleveut2017bridging}. The choice of step-size in terms of learning rate typically depends on the Lipschitz modulus of continuity, which then affects the asymptotic mean square error of the parameter estimate \cite{mokhtari2016class, tan2016barzilai, li2019convergence}. Hypothetically, one would like to preserve fast rates while ever-tightening the radius of convergence, to obtain exact convergence in the limit.
 
To do so, we shift focus to mini-batching. Mini-batching is a procedure where the stochastic gradient is averaged over multiple samples per iteration. Under constant learning rates, its practical effect is to reduce the variance of stochastic approximation error, which tightens asymptotic  convergence \cite{Li2014, Konecny2015, yuan2016stochastic}. Intriguingly, it has recently been established that when the batch-size grows geometrically with the iteration index, SGD obtains exact solutions in the limit even under fixed step-size \cite{Byrd2012, bottou2018optimization, balles2017coupling}. With the drawback of large sample complexity, this fact then motivates allowing the batch-size to grow as slowly as possible, while maintaining the exact convergence and an optimal learning rate determined by problem smoothness. Doing so is exactly the proposed strategy of this work, which we establish yields an overall reduction in sample complexity relative to standard approaches.

By co-considering the batch-size and the step-size simultaneously, this paper proposes an optimal parameter selection algorithm, the two scale adaptive (TSA) scheme, for both convex and non-convex stochastic optimization problems. Two sub-schemes, the post TSA and the prior TSA, are developed with emphasis on sample complexity and learning rate during the tradeoff, respectively. In Section \ref{sec:problem}, we clarify details of the problem definition and the standard SGD method. In Section \ref{sec:algorithm}, we provide preliminary analysis for SGD on convex problems, based on which we develop the convex TSA scheme that outperforms standard SGDs. In Section \ref{sec:nonalgorithm}, we propose a variant of TSA for non-convex problems. In Section \ref{sec:convergence}, we establish our main results that characterize the convergence, rate and sample complexity of TSA. We show an overall reduction in sample complexity of TSA under fast learning rates, equipped with the exact convergence.

\begin{table*}[]

\normalsize

\centering

\renewcommand\tabcolsep{5pt}

\caption{Two scale adaptive (TSA) scheme for convex and non-convex problems. TSA performs the inner time-scale and outer time-scale recursively with $t$ the number of inner/outer time-scales and $k$ the number of iterations. The $t$-th inner time-scale runs SGD with the optimal step-size $\alpha_t = 1/L$ and batch-size $n_t$ until the largest iteration before $Q_1^t$ drops below $Q_2^t$. The $t$-th outer time-scale follows to increase the batch-size by the addition $n_{t+1}=n_t + \beta$ or the multiplication $n_{t+1} = m n_t$. Here $\ell, L$ and $w$ are problem parameters, $\beta$ and $m$ the additive and multiplicative parameters, $K_t$ the duration of $t$-th inner time-scale, and $A = \log_{1-\ell/L}(1/2m)$, $B=\log_{1-\ell/L}(1/m)$ and $D=F(\bbx_0)\!-\!F(\bbx^*)$ are concise constants for presentation. }

\label{Tab03}

\begin{tabular}{ccccccccccccc}

\toprule

\multirow{1}{*}{Method} & \multicolumn{1}{c}{$Q_1^t$} & \multicolumn{1}{c}{$Q_2^t$} & \multicolumn{1}{c}{Exact convergence} & \multicolumn{1}{c}{Rate (addition)} & \multicolumn{1}{c}{Rate (multiplication)} \\

\midrule

Convex post TSA    & $2^{t}\!\left(1\!-\!\frac{\ell}{L} \!\right)^k \!D$         & $\frac{w}{2 n_t\ell}$        & $\checkmark$  & $\ccalO \left(2^t\left(1-\frac{\ell}{L}\right)^k\right)$ &   $\ccalO\left(m^{-\frac{k}{A}}\right)$  \\[5pt]

Convex prior TSA                                     & $\left( 1 \!-\! \frac{\ell}{L} \right)^{k}\! D$                    & $\frac{w}{2 n_t\ell}$   & $\checkmark$           & $\ccalO\left(t\left(1-\frac{\ell}{L}\right)^k \right)$  &      $\ccalO\left(\!\frac{k}{B}\left(1-\frac{\ell}{L}\right)^k\!\right)$       \\[5pt]

Non-convex post TSA                                    & $\frac{1}{k}\! \left( 2LD \!+\!\sum_{i=0}^{t\!-\!1}\!\frac{K_i w}{ n_i} \right)$                    &                   $\frac{w}{n_t}$     & $\checkmark$           & $\backslash$  &  $\backslash$           \\[5pt]

Non-convex prior TSA                                      & $\frac{1}{k} 2LD $    &                   $\frac{w}{n_t}$  &  $\checkmark$ &$\ccalO\left(\frac{\log k}{k}\right)$&  $\ccalO\left(\frac{\log k}{k}\right)$  \\[5pt]

\bottomrule

\end{tabular}

\end{table*}

We perform experiments on the supervised classification problem of hand-written digits with the linear model (convex) and the convolutional neural network (non-convex) to corroborate our model and theory in Section \ref{sec:sims}. Lastly, we conclude the paper in Section \ref{sec:conclusion}. Overall, by evolving parameters with the TSA scheme, we minimize both asymptotic bias and sample path variance of the learning procedure, and dispel the need of presetting SGD parameters at the outset, where the latter deteriorates performance in practice.

\section{Problem Formulation} \label{sec:problem}

Denote by $\bbx \in \mathbb{R}^p$ a decision vector to be determined, e.g., the parameters of a statistical model, $\bbxi \in \Omega$ a random variable with the probability distribution $\bbp$, and $f(\bbx, \bbxi)$ an objective function whose average minimizer we seek to compute. For instance, realizations $\bbxi_i$ of $\bbxi$ are training examples $(\bbz_i, \bby_i)$ in supervised learning, and $f(\bbx, \bbxi)$ represents the model fitness of $\bbx$ at data points $(\bbz_i, \bby_i)$. We consider the stochastic optimization problem defined as
\begin{equation}\label{eq:main_prob}
\begin{split}
\min_{\bbx} F(\bbx) = \min_{\bbx} \int_{\Omega} f(\bbx,\bbxi) p(d\bbxi) = \min_{\bbx} \mathbb{E} [f(\bbx,\bbxi)]
\end{split}
\end{equation}
%
where the population cost $F(\bbx)$ can be either convex (Sec. \ref{sec:algorithm}) or non-convex (Sec. \ref{sec:nonalgorithm}). Typically, the distribution $\bbp$ of $\bbxi$ is unknown, and therefore $F(\bbx)$ in \eqref{eq:main_prob} is not computable. This issue usually motivates drawing $N$ samples from $\bbp$ and solving the corresponding empirical risk minimization (ERM) problem \cite{chaudhuri2011differentially}
\begin{equation}\label{eq:empirical}
\min_{\bbx} F(\bbx) = \min_{\bbx} \frac{1}{N}\sum_{i=1}^N f(\bbx,\bbxi_i) \; 
\end{equation}
where, for instance, $ \{ f(\bbx,\bbxi_i)\}$ may represent the hinge or logistic loss at data points $\{\bbxi_i\}$ of logistic regression or support vector machine classification \cite{King2000}, respectively. 
%

We set our focus on the general population problem \eqref{eq:main_prob} in this paper. Define the simplifying notation $f_i(\bbx) := f(\bbx,\bbxi_i)$ and denote by $\nabla f_\ccalS(\bbx)$ the average gradient of a sample set $\ccalS=\{\bbxi_1,\dots,\bbxi_n\}$ as
\begin{equation}\label{eq:mini_batch|}
\nabla f_S(\bbx)=\frac{1}{n}\sum_{\bbxi_i\in S}\nabla f_i(\bbx).
%
\end{equation}
The preceding expression is commonly referred to as the mini-batch gradient. To propose our algorithm for \eqref{eq:main_prob} , we require the following assumptions that are commonly satisfied in practical optimization problems \cite{Byrd2012, bottou2018optimization}.
\begin{assumption} \label{asm1}
The gradient of expected objective function $\nabla F(\bbx)$ is Lipschitz continuous, i.e., for all $ \bbx, \bby \in \mathbb{R}^p $, there exists a constant $L$ such that
\begin{equation}
\begin{split}
\| \nabla F(\bbx) - \nabla F(\bby) \|_2 \le L \| \bbx - \bby \|_2,
\end{split}
\end{equation}
where $\| \cdot \|_2$ is the ($2$-) norm.
\end{assumption}
\begin{assumption}\label{as2}
The objective functions $f(\bbx, \bbxi)$ are differentiable, and for any $\bbx \in \mathbb{R}^p$, there exists a constant $w$ such that
\begin{equation}
\begin{split}
\| {\rm var}[\nabla f(\bbx, \bbxi)] \|_1 \le w,
\end{split}
\end{equation}
where $\| \cdot \|_1$ is the ($1$-) norm and ${\rm var}[\nabla f(\bbx, \bbxi)]$ is the variance vector, each component of which is the variance of corresponding component of vector $\nabla f(\bbx, \bbxi)$.
\end{assumption}

Mini-batch stochastic gradient descent (SGD) is a standard approach to solve \eqref{eq:main_prob}. Specifically, it estimates the gradient unbiasedly with a random mini-batch gradient at each iteration, based on which it updates the decision vector $\bbx$ as 
\begin{equation}\label{eq:sgd}
\begin{split}
\bbx_{k+1} = \bbx_k - \alpha_k \nabla f_{S_k}(\bbx_k).
\end{split}
\end{equation}
Here, $\alpha_k$ is the step-size and $S_k$ is the mini-batch selected at $k$-th iteration. In this work, we are concerned with optimal selections of batch-size $n_k:=\vert S_k \vert$ and step-size $\alpha_k$ in according to their ability to maximize the convergence progress of each sample, and thus not needlessly waste data. Next we shift gears to doing so in terms of mean convergence for convex and non-convex problems, respectively.



\section{Adaptive Batching in Convex Problems} \label{sec:algorithm}

We begin by considering convex problems. Specifically, in addition to Assumptions  \ref{asm1}-\ref{as2}, we impose the condition that the population cost $F(\bbx)$ is strongly convex as stated next.

\begin{assumption} \label{as3}
The objective functions $\{ f_i(\bbx) \}$ are differentiable, and the expected objective $F(\bbx)$ is strongly convex, i.e., for all $\bbx, \bby \in \mathbb{R}^p$, there exists a constant $\ell$ such that
\begin{equation}
\begin{split}
F(\bbx) \ge F(\bby)+ \nabla F(\bby)^\top (\bbx-\bby) + \frac{\ell}{2}\| \bbx-\bby \|^2_2.
\end{split}
\end{equation}
\end{assumption}

Under Assumptions  \ref{asm1}-\ref{as3}, we may characterize the convergence rate of \eqref{eq:sgd} in expectation as a function of batch-size $n_k$ and step-size $\alpha_k$, which forms the foundation of our parameter scheduling scheme.
\begin{proposition}\label{prop1}
Under Assumptions \ref{asm1}-\ref{as3}, the expected sub-optimality sequence of SGD \eqref{eq:sgd} with batch-size at iteration $k$ as $\vert S_k \vert = n_k$, satisfies
\begin{equation}\label{eq:decrement}
\begin{split}
 \mathbb{E}\left[ F(\bbx_{k+1})-F(\bbx^*)\right] &\le  \left( \prod_{i=0}^k r(\alpha_i) \right) \left( F(\bbx_{0})-F(\bbx^*) \right) + \sum_{i=0}^k \left( \frac{\alpha_i^2 L w}{2n_i} \prod_{j=i+1}^k r(\alpha_j) \right)
\end{split}
\end{equation}
where $r(\alpha)=1-2\alpha \ell + L\ell\alpha^2$ is the modulus of contraction, and the expectation $\mathbb{E}[\cdot]$ is over the unknown distribution of $\bbxi$ whose samples $\ccalS_k = \{\bbxi_j \}_{j=1}^{n_k}$ are observed at iteration $k$.
\end{proposition}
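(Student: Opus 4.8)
The plan is to derive a one-step contraction of the expected sub-optimality and then unroll it into the stated product-and-sum form. First I would apply the descent lemma that follows from the $L$-Lipschitz gradient of Assumption~\ref{asm1}: for any $\bbx,\bby$, $F(\bby)\le F(\bbx)+\nabla F(\bbx)^\top(\bby-\bbx)+\tfrac{L}{2}\|\bby-\bbx\|_2^2$. Substituting the SGD step \eqref{eq:sgd}, with $\bby=\bbx_{k+1}$ and $\bbx=\bbx_k$, expresses $F(\bbx_{k+1})$ in terms of $\nabla F(\bbx_k)^\top\nabla f_{\ccalS_k}(\bbx_k)$ and $\|\nabla f_{\ccalS_k}(\bbx_k)\|_2^2$.

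Next I would take the expectation conditioned on $\bbx_k$. Since the mini-batch $\ccalS_k$ is drawn i.i.d. and independently of the history, the stochastic gradient is unbiased, $\mathbb{E}[\nabla f_{\ccalS_k}(\bbx_k)\mid\bbx_k]=\nabla F(\bbx_k)$, so the cross term collapses to $-\alpha_k\|\nabla F(\bbx_k)\|_2^2$. For the quadratic term I would use the bias--variance split $\mathbb{E}[\|\nabla f_{\ccalS_k}(\bbx_k)\|_2^2\mid\bbx_k]=\|\nabla F(\bbx_k)\|_2^2+\mathbb{E}[\|\nabla f_{\ccalS_k}(\bbx_k)-\nabla F(\bbx_k)\|_2^2\mid\bbx_k]$, and bound the residual by $w/n_k$: averaging $n_k$ i.i.d. sample gradients shrinks the covariance by a factor $n_k$, and the trace of the single-sample gradient covariance is exactly the $1$-norm of its variance vector, which Assumption~\ref{as2} caps by $w$. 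Collecting terms yields
\[
\mathbb{E}[F(\bbx_{k+1})\mid\bbx_k]\le F(\bbx_k)-\Big(\alpha_k-\tfrac{L\alpha_k^2}{2}\Big)\|\nabla F(\bbx_k)\|_2^2+\frac{L\alpha_k^2 w}{2n_k}.
\]

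To close the recursion I would invoke strong convexity (Assumption~\ref{as3}), which after minimizing its quadratic lower bound over $\bby$ gives the gradient-domination inequality $\|\nabla F(\bbx)\|_2^2\ge 2\ell\big(F(\bbx)-F(\bbx^*)\big)$. Provided $\alpha_k\le 2/L$ so the coefficient $\alpha_k-\tfrac{L\alpha_k^2}{2}$ is nonnegative (ensuring the substitution preserves the inequality direction), subtracting $F(\bbx^*)$ produces the one-step bound $\mathbb{E}[F(\bbx_{k+1})-F(\bbx^*)\mid\bbx_k]\le r(\alpha_k)\big(F(\bbx_k)-F(\bbx^*)\big)+\tfrac{L\alpha_k^2 w}{2n_k}$ with $r(\alpha_k)=1-2\alpha_k\ell+L\ell\alpha_k^2$. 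Taking total expectation and iterating this scalar recursion from $i=0$ to $k$ gives the product $\prod_{i=0}^k r(\alpha_i)$ multiplying $F(\bbx_0)-F(\bbx^*)$ together with the accumulated noise sum $\sum_{i=0}^k \tfrac{\alpha_i^2 Lw}{2n_i}\prod_{j=i+1}^k r(\alpha_j)$, which is exactly \eqref{eq:decrement}.

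The step I expect to be the main obstacle is the variance accounting: one must justify cleanly that the per-iteration noise enters with the $1/n_k$ factor, i.e. correctly tie the $1$-norm variance bound of Assumption~\ref{as2} to the trace of the mini-batch gradient covariance and to its reduction by the batch-size $n_k$. The remaining pieces are routine bookkeeping: the descent lemma, the small-step-size conditions that make the coefficient $\alpha_k-\tfrac{L\alpha_k^2}{2}$ nonnegative (so the strong-convexity substitution preserves the inequality) and keep $r(\alpha_k)\in[0,1)$ (so the unrolling stays monotone), and the iteration of the resulting linear scalar recursion.
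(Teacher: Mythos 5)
Your proposal is correct and follows essentially the same route as the paper's own proof: descent lemma from Assumption~\ref{asm1}, unbiasedness of the mini-batch gradient, the bias--variance split with the $w/n_k$ bound from Assumption~\ref{as2}, the strong-convexity gradient-domination inequality $\|\nabla F(\bbx)\|_2^2 \ge 2\ell\,(F(\bbx)-F(\bbx^*))$, and unrolling of the resulting linear recursion. If anything, you are slightly more careful than the paper, which silently assumes the coefficient $\alpha_k - L\alpha_k^2/2$ is nonnegative when substituting the gradient-domination bound, a condition you state explicitly.
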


\begin{proof}
See Appendix \ref{pr:prop1}.
\end{proof}

Proposition \ref{prop1} establishes the dependence of convergence rate on problem constants, step-size and batch-size selections.
 In particular, we observe an inverse dependence on the batch-size with respect to the later term, and a suggested optimal value on the step-size in order to make the first term decay quickly. This relationship may be employed to discern ways of reducing the overall computational complexity of \eqref{eq:sgd} required to attain the optimal solution of \eqref{eq:main_prob} with fast rates. 
 
For special cases with constant $\alpha_k = \alpha$ and $n_k = n$, Proposition \ref{prop1} reduces to the following corollary.

\begin{corollary} \label{coro1}
With the same settings as Proposition \ref{prop1}, assume that the step-size and the batch-size are constants as $\alpha_k = \alpha$ and $n_k = n$. Then, it holds that
\begin{align} \label{eq:coro1main}
&\mathbb{E}\left[ F(\bbx_{k+1})-F(\bbx^*)\right] \le \underbrace{r(\alpha)^{k+1} \left( F(\bbx_{0})-F(\bbx^*)\right)}_{:=Q_1}+ \underbrace{\frac{\alpha L w}{2n(2\ell-L\ell\alpha)}}_{:=Q_2}
\end{align}
with $r(\alpha) = 1-2\alpha \ell + L\ell\alpha^2$.
\end{corollary}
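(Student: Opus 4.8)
The plan is to specialize the recursion of Proposition \ref{prop1} to constant parameters and then recognize the resulting error term as a finite geometric series. First I would substitute $\alpha_i = \alpha$ and $n_i = n$ for all $i$ into \eqref{eq:decrement}. The leading product collapses immediately to $\prod_{i=0}^k r(\alpha) = r(\alpha)^{k+1}$, so the first term becomes exactly $r(\alpha)^{k+1}\bigl(F(\bbx_0) - F(\bbx^*)\bigr) = Q_1$, with no further manipulation required.

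The substance lies in the second term. With constant parameters the inner product reduces to $\prod_{j=i+1}^k r(\alpha) = r(\alpha)^{k-i}$, so the whole sum factors as $\tfrac{\alpha^2 L w}{2n} \sum_{i=0}^k r(\alpha)^{k-i}$. Reindexing $m = k-i$ turns this into the geometric series $\tfrac{\alpha^2 L w}{2n}\sum_{m=0}^k r(\alpha)^m = \tfrac{\alpha^2 L w}{2n}\cdot\tfrac{1 - r(\alpha)^{k+1}}{1 - r(\alpha)}$, which is valid provided $r(\alpha) \neq 1$. I would then simplify the denominator via $1 - r(\alpha) = 2\alpha\ell - L\ell\alpha^2 = \alpha\ell(2 - L\alpha)$, cancelling one factor of $\alpha$ to obtain $\tfrac{\alpha L w}{2n(2\ell - L\ell\alpha)}\,\bigl(1 - r(\alpha)^{k+1}\bigr)$, i.e. exactly $Q_2\bigl(1 - r(\alpha)^{k+1}\bigr)$.

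Finally, I would discard the negative contribution by bounding $1 - r(\alpha)^{k+1} \le 1$, which leaves precisely $Q_2$ and completes the proof. The only point needing care, and the step I would flag as the main (modest) obstacle, is justifying this last inequality together with the geometric-sum step: dropping the term requires $r(\alpha)^{k+1} \ge 0$, hence $r(\alpha) \ge 0$, while the closed form requires $r(\alpha) < 1$. Both hold in the regime of interest, since $r(\alpha) = 1 - 2\alpha\ell + L\ell\alpha^2$ is minimized at $\alpha = 1/L$ with value $1 - \ell/L \in [0,1)$ (using $\ell \le L$), so admissible step-sizes keep $r(\alpha) \in [0,1)$. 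I would state this explicitly so that the passage from the exact geometric sum to the clean constant $Q_2$ is unambiguous.
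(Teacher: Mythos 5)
Your proposal is correct and follows essentially the same route as the paper's proof: substitute the constant parameters into Proposition \ref{prop1}, sum the resulting geometric series in closed form, simplify $1-r(\alpha) = \alpha\ell(2-L\alpha)$, and drop the $r(\alpha)^{k+1}$ correction to obtain $Q_2$. Your extra care in noting that the final bound needs both $r(\alpha)\ge 0$ (which holds for every $\alpha$ since the discriminant of the quadratic is $4\ell(\ell-L)\le 0$, so $r(\alpha)\ge 1-\ell/L$) and $r(\alpha)<1$ is a worthwhile refinement of the paper's terse remark that ``$r(\alpha)<1$ is used,'' but it does not change the argument.
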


\begin{proof}
See Appendix \ref{pr:coro11}.
\end{proof}

The bound in \eqref{eq:coro1main} jointly depends on the step-size $\alpha$ and batch-size $n$, and consists of two terms. $Q_1$, the convergence rate term, decreases with iteration $k$ provided $r(\alpha) <1$. The error neighborhood term $Q_2$ typically determines the limiting radius of convergence for constant step-size, and is associated with the variance of stochastic approximation (sub-sampling) error. Our algorithm, the two scale adaptive (TSA) scheme, is then proposed by exploiting the structure of $Q_1$ and $Q_2$.

\subsection{Two Scale Adaptive Scheme}

Observe from \eqref{eq:coro1main} that $Q_1$ decreases monotonically while $Q_2$ stays constant as iteration $k$ increases. Once $Q_1$ decays to be less than $Q_2$, $\bbx_k$ cannot converge to a tighter neighborhood than that determined by $Q_2$. Therefore, in order to continue tightening the radius of convergence, one must reduce $Q_2$ by either decreasing the step-size $\alpha$ or increasing the batch-size $n$. The TSA scheme gives a strategy about when and how to make this change. The algorithm is divided into two scales: the inner time-scale performs SGD with constant step-size $\alpha$ and batch-size $n$, and the outer time-scale tunes algorithm parameters $\alpha$, $n$ to tighten the radius of convergence. Two sub-schemes: the post TSA and the prior TSA, are proposed based on different stop criterions of the inner time-scale. Details are formally introduced below.  

\emph{Initialization}: With initial step-size $\alpha = \alpha_0$ and batch-size $n = n_0$, the corresponding $Q_1^0$ and $Q_2^0$ are defined as \eqref{eq:coro1main}. Without loss of generality, assume $F(\bbx_{0})-F(\bbx^*)$ is large such that $Q_1^0 \ge Q_2^0$ initially. Then $Q_1^0$ dominates, while $Q_2^0$ is relatively small. Note that the decreasing rate of $Q_1^0$ is $ r(\alpha_0)$, which is a quadratic function of step-size $\alpha_0$. To ensure an optimal decrease, $\alpha_0=1/L$ is selected to achieve its minimal value $1-\ell/L$. The corresponding expression for $Q_2^0$ is
\begin{equation}
\begin{split}
Q_2^0 = \frac{\frac{1}{L} L w}{2n_0(2\ell-L\ell\frac{1}{L})}= \frac{w}{2  n_0\ell}.
\end{split}
\end{equation}
Fixing step-size $\alpha=1/L$ ensures the fastest decrease of $Q_1$, during which we propose evolving the batch-size $n$ to tighten the error neighborhood $Q_2$. Proceeding from this initialization, we shift to discuss progressively enlarging the batch-size.

(1-A) \emph{Inner time-scale (post)}. At $t$-th inner time-scale, let $\alpha_{t} = 1/L$ and $n_{t}$ be the current step-size and batch-size, and $K$ the beginning number of iteration. Follow the SGD method with constant $\alpha_{t}$ and $n_{t}$, and define $Q_1^t$ and $Q_2^t$ as
\begin{align}\label{eq:error_conditions}
&Q_1^t = \left( 1 - \frac{\ell}{L} \right)^{k_t} \mathbb{E}\left[ F(\bbx_{K})-F(\bbx^*) \right],\\
\label{eq:error_conditions2}&Q_2^t = \frac{\alpha_t L w}{2n_t(2\ell-L\ell\alpha_t)} = \frac{w}{2 n_t\ell} 
\end{align} 
with $k_t >0$ the passed number of iterations at $t$-th inner time-scale. As $k_t$ increases, $Q_1^t$ decreases multiplied by $1-\ell/L$, while $Q_2^t$ stays constant. Then there exists $\tilde{K}_t$ such that 
\begin{equation} \label{eq:innercond}
\begin{split}
\tilde{K}_t =  \max_{k_t} \{ Q_1^t \ge Q_2^t\}. 
\end{split}
\end{equation}
$\tilde{K}_t$ is the largest iteration before $Q_1^t$ drops below $Q_2^t$ and named as the duration of $t$-th inner time-scale. 

Unfortunately, since the sub-optimality at $K$-th iteration $\mathbb{E}\left[ F(\bbx_{K})-F(\bbx^*) \right]$ is unknown, the condition \eqref{eq:innercond} cannot be used directly. Therefore, we propose the modified rule
%
\begin{equation} \label{eq:innercond05}
\begin{split}
K_t \!=  \!\max_{k_t}\! \left\{\! 2^{t}\!\!\left(1-\frac{\ell}{L} \!\right)^{\!\!K + k_{t}} \!\!\!\!\!\!\!\ \left( F(\bbx_0)-F(\bbx^*)\right) 
 \ge \frac{w}{2 n_t\ell} \right\}\!. 
\end{split}
\end{equation}
Let $\{ \tilde{K}_0, \tilde{K}_1,..., \tilde{K}_{t-1} \}$ be durations of previous inner time-scales such that $K=\sum_{i = 0}^{t-1} \tilde{K}_i$. The modified condition \eqref{eq:innercond05} is derived by utilizing the result \eqref{eq:coro1main} in Corollary \ref{coro1}
 \begin{equation} \label{eq:innercond051}
\begin{split}
\mathbb{E}\left[ F(\bbx_{K})-F(\bbx^*) \right] &= \mathbb{E}\left[ F(\bbx_{\sum_{i = 0}^{t-1} \tilde{K}_i})-F(\bbx^*) \right] \le \left( 1 - \frac{\ell}{L} \right)^{\tilde{K}_{t-1}} \mathbb{E}\left[ F(\bbx_{\sum_{i = 0}^{t-2} \tilde{K}_i})-F(\bbx^*) \right] + Q_2^{t-1}\\
&\le 2 \left( 1 - \frac{\ell}{L} \right)^{\tilde{K}_{t-1}} \mathbb{E}\left[ F(\bbx_{\sum_{i = 0}^{t-2} \tilde{K}_i})-F(\bbx^*) \right]
\end{split}
\end{equation}
where the last inequality follows from the definition of $\tilde{K}_{t-1}$ given in \eqref{eq:innercond}. Recursively applying this property, we obtain that $Q_1^t$ is bounded by
\begin{equation} \label{eq:conalg1}
\begin{split}
&Q_1^t\le 2^{t}\left(1-\frac{\ell}{L} \right)^{K + k_{t}} \left( F(\bbx_0)-F(\bbx^*)\right).
\end{split}
\end{equation}
Then, \eqref{eq:innercond05} is derived from using the fact that the preceding expression provides a computable condition for which $Q_1^t \geq Q_2^t$, provided an initial estimate on the sub-optimality is available. Whenever it is not, a large constant can replace it. Specifically, in practice, by assuming $\vert F(\bbx) \vert$ is bounded, $F(\bbx_0)-F(\bbx^*)$ can be approximated by $\max_{\bbx, \bby} \vert F(\bbx) - F(\bby) \vert$ initially. 
%
Once the iteration $k_t$ reaches $K_t$, we move forward to the $t$-th outer time-scale to augment the batch-size.

(1-B) \emph{Inner time-scale (prior)}. At $t$-th inner time-scale, let $\alpha_{t} = 1/L$, $n_{t}$ and $K$ be the current step-size, batch-size and the beginning number of iteration. Perform the SGD with constant $\alpha_{t}$ and $n_{t}$, and define $Q_1^t$ and $Q_2^t$ as
\begin{align}\label{eq:error_conditions1}
&Q_1^t = \left( 1 - \frac{\ell}{L} \right)^{K + k_{t}} \left( F(\bbx_{0})-F(\bbx^*)\right),\\
&Q_2^t = \frac{\alpha_t L w}{2n_t(2\ell-L\ell\alpha_t)} = \frac{w}{2 n_t\ell} 
\end{align} 
with $k_t >0$ the passed number of iterations at $t$-th inner time-scale. With the increasing of $k_t$, $Q_1^t$ decreases multiplied by $1-\ell/L$ recursively, while $Q_2^t$ stays constant. Define the duration of $t$-th inner time-scale $K_t$ as
\begin{equation} \label{eq:innercond1}
\begin{split}
K_t =  \max_{k_t} \left\{ Q_1^t \ge Q_2^t \right\}.
\end{split}
\end{equation}
Once $k_t$ reaches $K_t$, we stop the $t$-th inner time-scale and augment the batch-size, as we detail next.

\begin{remark}\normalfont
The difference between the post (I-A) and the prior (I-B) sub-schemes lies on that $Q_1^t$ of the post depends on the sub-optimality $\mathbb{E}\left[ F(\bbx_{K})-F(\bbx^*) \right]$ at iteration $K$, while $Q_1^t$ of the prior directly utilizes the initial sub-optimality $F(\bbx_{0})-F(\bbx^*)$ without accounting for accumulated error neighborhood terms $\{ Q_2^0, \ldots, Q_2^{t-1} \}$ as demonstrated in \eqref{eq:innercond051}. As such, the prior will increase the batch-size faster  with a faster convergence rate, but also require more sample complexity compared with the post. Two sub-schemes imply two adaptive strategies that balance rates and variance with different emphasis.   
\end{remark}

(2) \emph{Outer time-scale}. At $t$-th outer time-scale, we evolve parameters to reduce the error neighborhood $Q_2^t$, which can be realized by either the decreasing of $\alpha_t$ or the increasing of $n_t$. The former slows down the convergence rate $r(\alpha_t)$, while the latter increases the sample complexity. The tradeoff between these two factors needs to be judiciously balanced.

Once the condition \eqref{eq:innercond05} (the post) or \eqref{eq:innercond1} (the prior) is satisfied, we increment the current batch-size $n_t$ to $n_{t+1}$ in one of two possible ways: addition and multiplication.
\begin{align}
\label{eq:add} n_{t+1} = n_t + \beta, \quad \beta \ge 1,\\
\label{eq:multi} n_{t+1} = m n_t, \quad m>1
\end{align}
where $\beta$ and $m$ are additive and multiplicative integer parameters, respectively. Though the selection of \eqref{eq:add} and \eqref{eq:multi} is not the key point in TSA, it is an available tradeoff that can be tuned in practice to help improve performance. So far $t$-th outer time-scale has been completed, and $(t+1)$-th inner time-scale follows recursively. 

Note that $\beta$ or $m$ is selected appropriately to ensure $K_t$ in \eqref{eq:innercond05} or \eqref{eq:innercond1} larger than zero. The post TSA and the prior TSA share a similar process but with different stop criterions in the inner time-scale. Together, the TSA scheme for convex problems is summarized as in Algorithm \ref{alg:TSAconve} with $t$ the number of inner/outer time-scales and $k$ the number of iterations. 
{\linespread{1}
\begin{algorithm}[t] \begin{algorithmic}[1]
\STATE \textbf{Input:} objective functions $\{ f(\bbx,\bbxi) \}$, decision vector $\bbx_0$
\STATE Set step-size $\alpha = 1/L$; sample-size $\vert S_0 \vert=n_0$ and $t=0$
\STATE Compute $Q_1^t =  F(\bbx_0)-F(\bbx^*)$, $Q_2^t = \frac{w}{2\ell n_0}$
\FOR [main loop]{$k = 0,1,2...$}
      \STATE Update the decision vector $\bbx_{k+1} = \bbx_k - \alpha \nabla f_{S_k}(\bbx_k)$
	  \STATE Compute $Q_1^t = \left(1-\frac{\ell}{L} \right) Q_1^t$ 
	  \IF {$\left(1-\frac{\ell}{L} \right)Q_1^t \le Q_2^t$} 
	      \STATE Update batch-size $n_{t+1} = n_t + \beta$ or $n_{t+1} = m n_t$
	      \STATE Post: Update $Q_1^{t+1} = 2 Q_1^t$, $Q_2^{t+1} = \frac{n_{t} Q_2^t}{n_{t+1}}$, $t=t+1$
	      \STATE Prior: Update $Q_2^{t+1} = \frac{n_{t} Q_2^t}{n_{t+1}}$, $t=t+1$
      \ENDIF  
\ENDFOR
\end{algorithmic}
\caption{Two Scale Adaptive Scheme (Convex Case)}\label{alg:TSAconve} \end{algorithm}}

\section{Adaptive Batching in Non-Convex Problems} \label{sec:nonalgorithm}

In this section, we present a modification of the TSA scheme that applies non-convex problems, i.e., $F(\bbx)$ in \eqref{eq:main_prob} does not satisfy Assumption \ref{as3}. Differently from the convex case, $F(\bbx) - F(\bbx^*)$ is no longer appropriate as the convergence criterion due to the lack of a decrement property [cf. \eqref{eq:decrement}] holding. Instead, we exploit the fact that the quantity $\| \nabla_\bbx F(\bbx) \|^2_2$ has a similar magnitude to $F(\bbx) - F(\bbx^*)$, and thus may replace the sub-optimality as a convergence criterion in the non-convex regime \cite{Kingma2015}. Thus, we start by characterizing the convergence rate of SGD \eqref{eq:sgd} in expectation in terms of this alternative criterion which obviates the need for convexity. This forms the basis upon which we develop the modified TSA scheme.

\begin{proposition}\label{prop2}
Under Assumptions \ref{asm1}-\ref{as2}, the SGD sequence \eqref{eq:sgd} with step-size and batch-size at iteration $k$ as $\alpha_k = \alpha$ and $\vert S_k \vert = n_k$, satisfies
\begin{equation} \label{eq:prop2}
\begin{split}
&\min_{0 \le i \le {k-1}} \mathbb{E}\left[ \| \nabla F(\bbx_{i})\|_2^2 \right] \le \frac{1}{k\hat{r}(\alpha)} \left( F(\bbx_{0}) - F(\bbx^*) \right)+ \sum_{i = 0}^{k-1}\frac{ \alpha^2 L w}{2kn_i \hat{r}(\alpha) }
\end{split}
\end{equation}
where $\hat{r}(\alpha) = \alpha-\alpha^2 L/2$ and the expectation $\mathbb{E}[\cdot]$ is over the unknown distribution of $\bbxi$ whose samples $\ccalS_k = \{\bbxi_j \}_{j=1}^{n_k}$ are observed at each iteration $k$.
\end{proposition}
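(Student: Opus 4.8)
The plan is to run the standard descent-lemma argument for smooth non-convex SGD, converting the one-step progress guaranteed by $L$-smoothness into a telescoping bound on the averaged squared gradient norm. First I would invoke Assumption~\ref{asm1}, which (being equivalent to $L$-Lipschitz continuity of $\nabla F$) yields the quadratic upper bound
\begin{equation}
F(\bbx_{k+1}) \le F(\bbx_k) + \nabla F(\bbx_k)^\top (\bbx_{k+1}-\bbx_k) + \frac{L}{2}\|\bbx_{k+1}-\bbx_k\|_2^2 .
\end{equation}
Substituting the SGD update \eqref{eq:sgd}, i.e. $\bbx_{k+1}-\bbx_k = -\alpha\nabla f_{S_k}(\bbx_k)$, turns the right-hand side into $F(\bbx_k) - \alpha\nabla F(\bbx_k)^\top\nabla f_{S_k}(\bbx_k) + \tfrac{L\alpha^2}{2}\|\nabla f_{S_k}(\bbx_k)\|_2^2$.

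Next I would take expectation conditioned on $\bbx_k$ over the fresh mini-batch $S_k$. Unbiasedness of the mini-batch gradient, $\mathbb{E}[\nabla f_{S_k}(\bbx_k)\mid\bbx_k]=\nabla F(\bbx_k)$, collapses the cross term to $-\alpha\|\nabla F(\bbx_k)\|_2^2$. For the quadratic term I would use the bias--variance split $\mathbb{E}[\|\nabla f_{S_k}(\bbx_k)\|_2^2\mid\bbx_k] = \|\nabla F(\bbx_k)\|_2^2 + \mathbb{E}[\|\nabla f_{S_k}(\bbx_k)-\nabla F(\bbx_k)\|_2^2\mid\bbx_k]$ and then bound the residual variance. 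This is the one step that genuinely consumes Assumption~\ref{as2}: because the $n_k$ samples are drawn i.i.d., the per-coordinate variance of the averaged gradient is $1/n_k$ times that of a single sample, so summing coordinates gives $\mathbb{E}[\|\nabla f_{S_k}(\bbx_k)-\nabla F(\bbx_k)\|_2^2\mid\bbx_k] = \tfrac{1}{n_k}\|{\rm var}[\nabla f(\bbx_k,\bbxi)]\|_1 \le w/n_k$. Collecting terms gives the key one-step recursion
\begin{equation}
\mathbb{E}[F(\bbx_{k+1})\mid\bbx_k] \le F(\bbx_k) - \hat{r}(\alpha)\|\nabla F(\bbx_k)\|_2^2 + \frac{L\alpha^2 w}{2 n_k},
\end{equation}
with $\hat{r}(\alpha)=\alpha-\alpha^2 L/2$ exactly as stated.

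To finish, I would take total expectations, rearrange to isolate $\hat{r}(\alpha)\,\mathbb{E}[\|\nabla F(\bbx_k)\|_2^2]$, and sum over $i=0,\dots,k-1$ so that the $F$-terms telescope, leaving $F(\bbx_0)$ minus $\mathbb{E}[F(\bbx_k)]$; lower-bounding the latter by $F(\bbx^*)$ discards it. Dividing by $k\hat{r}(\alpha)$ and using that the minimum over $i$ is at most the average, $\min_{0\le i\le k-1} \mathbb{E}[\|\nabla F(\bbx_i)\|_2^2]\le \tfrac1k\sum_{i=0}^{k-1} \mathbb{E}[\|\nabla F(\bbx_i)\|_2^2]$, produces \eqref{eq:prop2}.

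The only delicate point is the variance-reduction step: it requires that the mini-batch be a fresh i.i.d. draw at each iteration, so that the single-sample bound of Assumption~\ref{as2} transfers to the $n_k$-sample average with the $1/n_k$ factor; everything else is bookkeeping. One should also note that $\hat{r}(\alpha)>0$ (equivalently $\alpha<2/L$) is implicitly needed for the rearrangement to preserve the inequality direction, which holds for the step-sizes of interest.
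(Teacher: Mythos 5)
Your proposal is correct and follows essentially the same route as the paper's proof: descent lemma from $L$-smoothness, substitution of the SGD update, unbiasedness plus the bias--variance split with the $w/n_k$ mini-batch variance bound (which the paper imports from its proof of Proposition \ref{prop1}), then rearranging, telescoping, lower-bounding $\mathbb{E}[F(\bbx_k)]$ by $F(\bbx^*)$, and using that the minimum is at most the average. Your explicit remarks on the i.i.d.\ sampling behind the $1/n_k$ variance reduction and the implicit requirement $\hat{r}(\alpha)>0$ are sound points that the paper leaves tacit.
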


\begin{proof}
See Appendix \ref{pr:prop2}.
\end{proof}

Proposition \ref{prop2} characterizes the dependence of convergence rate on problem constants as well as the step-size and the batch-size. Specifically, an optimal step-size can be selected for a minimal value of first term, and the second term is inversely dependent on the batch-size. For the special case with constant $n_k = n$, we have
\begin{align} \label{eq:prop21}
&\min_{0 \le i \le {k-1}} \mathbb{E}\left[ \| \nabla F(\bbx_{i})\|_2^2 \right] \le \underbrace{\frac{1}{\alpha k-\frac{\alpha^2 L k}{2}} \left( F(\bbx_{0}) - F(\bbx^*)\right) }_{:=Q_1}+ \underbrace{ \frac{ \alpha L w}{2 n \left(1-\frac{\alpha L}{2}\right)}.}_{:=Q_2}
\end{align}
Similarly to \eqref{eq:coro1main} for convex problems, the bound in \eqref{eq:prop21} consists of two terms. The convergence rate term $Q_1$ decreases as iteration $k$ increases. The error neighborhood term $Q_2$ indicates the limiting radius of convergence, which stays constant as long as $n$ and $\alpha$ are constant. Both two terms depends on the batch-size $n$ and step-size $\alpha$ jointly. Based on these observations, the TSA scheme can be adjusted appropriately for non-convex problems.

\subsection{Non-convex Two Scale Adaptive Scheme}

From \eqref{eq:prop21}, we observe that $Q_1$ decreases monotonically while $Q_2$ keeps constant as iteration $k$ increases. When $Q_1$ decays below $Q_2$, the convergence accuracy cannot be tightened beyond $Q_2$. In this case, either the decreasing of step-size $\alpha$ or the increasing of batch-size $n$ is required at some point to reduce the radius of convergence. To surmount this issue, the non-convex TSA scheme consists of two scales: the inner time-scale performs SGD with constant step-size $\alpha$ and batch-size $n$, and the outer time-scale tunes parameters to tighten the convergence radius. Similarly, we develop two sub-schemes, the post TSA and the prior TSA, with different inner time-scale stop criterions.

\emph{Initialization}: Let $\alpha = \alpha_0$ and $n = n_0$ be the initial step-size and batch-size. Define $Q_1^0$, $Q_2^0$ as in \eqref{eq:prop21} and assume $F(\bbx_{0})-F(\bbx^*)$ is large such that initially $Q_1^0 \ge Q_2^0$. The multiplicative term $1/(\alpha_0 k - \alpha_0^2 L k/2)$ in $Q_1^0$, whose denominator is quadratic, is minimized at $2L/k$ with step-size $\alpha_0 = 1/L$. For this selection, $Q_2^0$ is
\begin{equation}
\begin{split}
Q_2^0 = \frac{ \frac{1}{L} L w}{2 n_0 \left(1-\frac{\frac{1}{L} L}{2}\right)}= \frac{w}{ n_0}.
\end{split}
\end{equation}
Our intention is to reduce the bound in \eqref{eq:prop21} for fixed $\alpha=1/L$ that minimizes $Q_1$ over all iterations. Then, we adapt the batch-size $n$ based on the following criterion to tighten the error neighborhood $Q_2$ continuously to null.

(1-A) \emph{Inner time-scale (post)}. At $t$-th inner time-scale, let $\alpha_{t} = 1/L$, $n_{t}$ and $K$ be the current step-size, batch-size and the beginning number of iterations. Proposition \ref{prop2} for this selection allows us to write
\begin{equation} \label{nonconvex1}
\begin{split}
\min_{0 \le i \le {k-1}} \mathbb{E}\left[ \| \nabla F(\bbx_{i})\|_2^2 \right]&\le \frac{2L}{k_t+K} \left( F(\bbx_0)-F(\bbx^*) \right) + \sum_{i=0}^{t-1}\frac{K_i w}{\left( k_t+K \right) n_i} + \frac{ w}{n_t}
\end{split}
\end{equation}
where $\{ n_0, n_1,..., n_{t-1} \}$ and $\{ K_0, K_1,..., K_{t-1} \}$ are batch-sizes and durations of previous inner time-scales, and $k_t$ is the passed number of iterations at $t$-th inner time-scale. The inequality $ k_tw/( k_t+K ) n_i \le w/n_t $ is also utilized in \eqref{nonconvex1}. Based on \eqref{nonconvex1}, we define $Q_1^t$ and $Q_2^t$ as
\begin{align}\label{eq:nonerror_conditions20}
&Q_1^t = \frac{1}{k_t+K} \left( 2L\left( F(\bbx_0)-F(\bbx^*) \right) +\sum_{i=0}^{t-1}\frac{K_i w}{ n_i} \right), \\
\label{eq:nonerror_conditions21}&Q_2^t = \frac{w}{n_t}.
\end{align}
Here, $Q_1^t$ keeps decreasing as $k_t$ increases, while $Q_2^t$ stays constant. There then exists $K_t$ such that
\begin{equation} \label{eq:noninnercond2}
\begin{split}
K_t =  \max_{k_t} \left\{ Q_1^t \ge Q_2^t \right\}.
\end{split}
\end{equation}
$K_t$ is the duration of $t$-th inner time-scale, which is the largest iteration before $Q_1^t$ drops below $Q_2^t$. Since $\{ n_0, n_1,..., n_{t-1} \}$ and $\{ K_0, K_1,..., K_{t-1} \}$ are historical information available from previous stages, the stop criterion \eqref{eq:noninnercond2} is ready for the implementation. Thus, the $t$-th inner time-scale runs SGD with step-size $\alpha_t = 1/L$ and batch-size $n_t$ for $K_t$ iterations to reduce $Q_1^t$. Right before $Q_1^t$ drops below $Q_2^t$, we increment the batch-size at $t$-th outer time-scale.

{\linespread{1}
\begin{algorithm}[t] \begin{algorithmic}[1]
\STATE \textbf{Input:} objective functions $\{ f(\bbx,\bbxi) \}$, decision vector $\bbx_0$
\STATE Set step-size $\alpha = 1/L$; sample-size $\vert S_0 \vert=n_0$ and $t=0$
\STATE Update the decision vector $\bbx_{1} = \bbx_0 - \alpha \nabla f_{S_k}(\bbx_0)$
\STATE Compute $Q_1^t =  2L (F(\bbx_0)-F(\bbx^*))$, $Q_2^t = \frac{w}{n_0}$
\STATE Update the decision vector $\bbx_{1} = \bbx_0 - \alpha \nabla f_{S_0}(\bbx_0)$
\FOR [main loop]{$k = 1,2...$}
	  \IF {$\frac{k}{k+1}Q_1^t \le Q_2^t$}
	      \STATE Update batch-size $n_{t+1} = n_t + \beta$ or $n_{t+1} = m n_t$
	      \STATE Set $K_t = k - \sum_{i=0}^{t-1}K_i$, $t=t+1$
	      \STATE Post: Update $Q_1^{t} = Q_1^{t-1} + \frac{K_{t-1} w}{k n_{t-1}}$, $Q_2^{t} = \frac{n_{t-1} Q_2^{t-1}}{n_{t}}$
	      \STATE Prior: Update $Q_2^{t} = \frac{n_{t-1} Q_2^{t-1}}{n_{t}}$
      \ENDIF
      \STATE Update the decision vector $\bbx_{k+1} = \bbx_k - \alpha \nabla f_{S_k}(\bbx_k)$
	  \STATE Compute $Q_1^t = \frac{k}{k+1} Q_1^t$
\ENDFOR
\end{algorithmic}
\caption{Two Scale Adaptive Scheme (Non-convex Case)}\label{alg:learning} \end{algorithm}}

(1-B) \emph{Inner time-scale (prior)}. At $t$-th inner time-scale, let $\alpha_{t} = 1/L$, $n_{t}$ and $K$ be the current step-size, batch-size and the beginning number of iteration. We now define $Q_1^t$ and $Q_2^t$ based on \eqref{nonconvex1} as
\begin{align}\label{eq:nonerror_conditions3}
&Q_1^t = \frac{2L}{k_t+K} \left( F(\bbx_0)-F(\bbx^*) \right),\\
&Q_2^t = \frac{w}{n_t}
\end{align}
where $k_t >0$ is the passed number of iterations at $t$-th inner time-scale. It follows the same rule that $Q_1^t$ decreases with the increasing of $k_t$, while $Q_2^t$ is constant. Then the duration of $t$-th inner time-scale $K_t$ is defined as the same as \eqref{eq:noninnercond2}. We stop the $t$-th inner time-scale when $k_t$ reaches $K_t$ and augment the batch-size as detailed next.

\begin{remark}\normalfont
The post (1-A) and the prior (1-B) sub-schemes are different, where $Q_1^t$ of the post contains accumulated variance errors of previous inner time-scales (the second term in the bound \eqref{nonconvex1}), while $Q_1^t$ of the prior only utilizes the initial sub-optimality error. Therefore, the prior converges faster while the post saves more sample complexity.   
\end{remark}

(2) \emph{Outer time-scale}. At $t$-th outer time-scale, we reduce the error neighborhood $Q_2^t$ with parameter tuning. This may be done by decreasing the step-size $\alpha_t$ or increasing the batch-size $n_t$, where the former slows the decrease of $Q_1$ and the latter increases the sample complexity. We propose increasing the batch-size $n_t$ to $n_{t+1}$ additively \eqref{eq:add} or multiplicatively \eqref{eq:multi}. Once the batch-size has been increased, we proceed to the $(t+1)$-th inner time-scale.

Here we also require the duration $K_t$ larger than zero by appropriate selections of $\beta$ and $m$. As previously mentioned, one can approximate the initial error $F(\bbx_0)-F(\bbx^*)$ with $\max_{\bbx, \bby} \vert F(\bbx) - F(\bby) \vert$ in practical experiments. We show the non-convex TSA scheme in Algorithm 2 with $t$ the number of inner/outer time-scales and $k$ the number of iterations.

\section{Performance Analysis}\label{sec:convergence}

In this section, we analyze the performance of the TSA scheme in terms of its convergence, rate and sample complexity for both convex and non-convex problems. In particular, we establish that it inherits the limiting properties of SGD with a fast rate, while reducing the number of training samples required to reach the $\epsilon$-suboptimality.

\subsection{Convex Problems}

Consider the convex case first. We show that the sequence of objective function values $F(\bbx_k)$ generated by the TSA scheme approaches the optimal value $F(\bbx^*)$ with the following theorem, which guarantees the exact convergence of TSA. 

\begin{theorem} \label{eq:thm1}
Consider the post and the prior TSA schemes for convex problems. If the objective functions satisfy Assumptions 1-3, both sequences of $ F(\bbx_k)$ and $\bbx_k$ converge to the optimal $F(\bbx^*)$ and $\bbx^*$ almost surely, i.e.,
 \begin{equation}
\begin{split}
&\lim_{k \to \infty} \mathbb{E}\left[ F(\bbx_k) - F(\bbx^*) \right] = 0,\\
&\lim_{k \to \infty} \mathbb{E}\left[ \| \bbx_k - \bbx^* \|_2 \right] = 0.
\end{split}
\end{equation}
\end{theorem}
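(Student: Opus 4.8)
The plan is to reduce both limits to the single scalar statement $\mathbb{E}[F(\bbx_k)-F(\bbx^*)]\to 0$ and then recover the iterate statement from strong convexity. Since $\nabla F(\bbx^*)=0$ at the minimizer, Assumption \ref{as3} with $\bby=\bbx^*$ gives $F(\bbx_k)-F(\bbx^*)\ge \frac{\ell}{2}\|\bbx_k-\bbx^*\|_2^2$, so $\mathbb{E}[\|\bbx_k-\bbx^*\|_2^2]\le \frac{2}{\ell}\mathbb{E}[F(\bbx_k)-F(\bbx^*)]$, and Jensen's inequality yields $\mathbb{E}[\|\bbx_k-\bbx^*\|_2]\le \sqrt{\mathbb{E}[\|\bbx_k-\bbx^*\|_2^2]}\to 0$. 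Thus the whole theorem follows once the function-value gap is shown to vanish in expectation.

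First I would argue that the batch-size diverges, $n_t\to\infty$. Throughout the scheme the step-size is frozen at $\alpha=1/L$, so within any inner time-scale the contraction factor is $r(1/L)=1-\ell/L<1$ and $Q_1^t$ decays geometrically in $k_t$ while $Q_2^t=w/(2n_t\ell)$ is a fixed positive constant. Hence the stopping rule (\eqref{eq:innercond05} for the post, \eqref{eq:innercond1} for the prior) is met after a finite duration $K_t$, and with $\beta$ or $m$ chosen so that $K_t\ge 1$ the total iteration count $\sum_i K_i$ diverges; consequently the number of completed time-scales grows without bound as $k\to\infty$. Because each increment uses $n_{t+1}=n_t+\beta$ with $\beta\ge 1$ or $n_{t+1}=m\,n_t$ with $m>1$, infinitely many increments force $n_t\to\infty$, so $Q_2^t\to 0$.

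The core estimate then comes directly from Proposition \ref{prop1} applied to the actual non-decreasing per-iteration batch-size sequence $\{n_i\}$ produced by the scheme. With $\alpha_i\equiv 1/L$, the bound \eqref{eq:decrement} reads $\mathbb{E}[F(\bbx_k)-F(\bbx^*)]\le (1-\ell/L)^{k}\left(F(\bbx_0)-F(\bbx^*)\right)+\sum_{i=0}^{k-1}\frac{w}{2L\,n_i}(1-\ell/L)^{k-1-i}$. The first term vanishes geometrically. For the variance sum I would use the standard truncation argument: given $\epsilon>0$, divergence of $n_i$ supplies an index $I$ with $w/(2L\,n_i)<\epsilon$ for all $i\ge I$; the finite head $\sum_{i<I}$ carries a factor $(1-\ell/L)^{k-1-i}\to 0$ and hence vanishes as $k\to\infty$, while the tail is bounded by $\epsilon\sum_{j\ge 0}(1-\ell/L)^{j}=\epsilon L/\ell$. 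Letting $\epsilon\to 0$ shows the variance sum tends to $0$, so $\mathbb{E}[F(\bbx_k)-F(\bbx^*)]\to 0$, completing the reduction of the first paragraph.

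I expect the main obstacle to be the divergence claim $n_t\to\infty$ rather than the limit computation. One must confirm that the inner time-scale always terminates and never stalls—this is where the frozen step-size $\alpha=1/L$, which keeps $r<1$ uniformly in $t$, is essential—and one must treat the post and prior variants uniformly despite their differing $Q_1$ bookkeeping (the post rescales $Q_1^{t+1}=2Q_1^t$, cf. \eqref{eq:conalg1}). The clean resolution is that the post/prior distinction only governs \emph{when} increments are triggered; both generate the same type of increasing batch-size sequence, and the decisive variance-sum estimate depends solely on $n_i\to\infty$. Once that divergence is secured, the remaining steps are routine.
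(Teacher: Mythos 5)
Your proposal is correct, and it takes a genuinely different route from the paper's own proof. The paper anchors its analysis at the beginning of the current ($t$-th) inner time-scale: it applies Proposition \ref{prop1} only over that time-scale, and then invokes the stop criterion \eqref{eq:innercond05} twice — once to bound the sub-optimality $\mathbb{E}[F(\bbx_{K})-F(\bbx^*)]$ at the start of the scale, and once more after multiplying by $(1-\ell/L)$ — to conclude that this starting sub-optimality is at most $2\, w/(2 n_{t-1}\ell)$; this yields the bound $\mathbb{E}[F(\bbx_{k})-F(\bbx^*)]\le 2(1-\ell/L)^{k-K-1}\, w/(2 n_{t-1}\ell)+w/(2 n_t\ell)$, whose terms all vanish because (as you also argue) every duration $K_t$ is finite and hence $t\to\infty$ and $n_t\to\infty$ as $k\to\infty$. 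The prior sub-scheme is then dispatched by a one-line monotonicity remark (the prior increases the batch-size faster than the post, hence also converges). You instead apply Proposition \ref{prop1} across the \emph{entire} trajectory with the realized per-iteration batch-size sequence and kill the convolution sum $\sum_{i}\frac{w}{2L n_i}(1-\ell/L)^{k-1-i}$ by a head/tail truncation argument that uses only $n_i\to\infty$ together with boundedness of the summands. What your route buys: it treats the post and prior variants uniformly and rigorously (no informal "faster batch growth implies convergence" comparison is needed), and it works verbatim for any batching rule whose batch-sizes diverge under a fixed step-size with contraction factor strictly less than one. What the paper's route buys: its intermediate inequality shows that the current sub-optimality is controlled by the current and previous error neighborhoods, a quantitative statement that is recycled almost unchanged in the rate and sample-complexity analyses (Theorems \ref{prop3} and \ref{thm2}). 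Both proofs share the same endgame: strong convexity with $\nabla F(\bbx^*)=0$ gives $F(\bbx_k)-F(\bbx^*)\ge(\ell/2)\|\bbx_k-\bbx^*\|_2^2$, and Jensen's inequality converts the function-value limit into the iterate limit.
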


\begin{proof}
See Appendix \ref{pr:thm1}.
\end{proof}

Theorem \ref{eq:thm1} shows that TSA inherits the asymptotic convergent behavior of SGD with diminishing step-size. However, this result is somewhat surprising since TSA attains exact convergence with a constant step-size, whereas SGD converges to an error neighborhood under this setting. The preserved constant step-size then helps TSA maintain a fast rate.

We then characterize the convergence rate of TSA, as stated in the following.

 \begin{theorem} \label{prop3}
Consider the TSA scheme for convex problems. If the objective functions satisfy Assumptions 1-3, the post and the prior TSA schemes converge approximately with rates of $\ccalO(2^t(1-\ell/L)^k)$ and $\ccalO(t(1-\ell/L)^k)$, where $k$ is the number of iterations and $t$ is the number of inner/outer time-scales. If particularizing the multiplicative rule \eqref{eq:multi} for augmenting the batch-size, the post and the prior converge approximately with rates of $\ccalO\left(m^{-k/\log_{1-\ell/L}(1/2m)}\right)$ and $\ccalO\left((1-\ell/L)^kk/\log_{1-\ell/L}(1/m)\right)$.
\end{theorem}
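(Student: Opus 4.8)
The plan is to first bound the expected sub-optimality at a generic iteration $k$ falling inside the $t$-th inner time-scale for each sub-scheme, and only afterwards convert the resulting two-parameter rate (in both $t$ and $k$) into a single-parameter rate in $k$ by solving for $t$ as a function of $k$ under the multiplicative rule \eqref{eq:multi}. Throughout write $\rho := 1-\ell/L$ and $D := F(\bbx_0)-F(\bbx^*)$.

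For the \emph{post} sub-scheme the first half is essentially already contained in \eqref{eq:conalg1}. Proposition \ref{prop1} with the optimal $\alpha=1/L$ gives $\mathbb{E}[F(\bbx_k)-F(\bbx^*)]\le Q_1^t+Q_2^t$, and since the inner time-scale runs precisely while $Q_1^t\ge Q_2^t$ I may bound $Q_2^t\le Q_1^t$ and then invoke \eqref{eq:conalg1} to get $\mathbb{E}[F(\bbx_k)-F(\bbx^*)]\le 2Q_1^t\le 2^{t+1}\rho^k D$, which is the rate $\ccalO(2^t\rho^k)$. For the \emph{prior} sub-scheme the key step is to unwind the across-time-scale recursion. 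Letting $e_t$ denote the sub-optimality at the start of the $t$-th inner scale and $S_{t+1}:=\sum_{i=0}^{t}K_i$ the cumulative iteration count, Proposition \ref{prop1} yields $e_{t+1}\le \rho^{K_t}e_t+Q_2^t$; iterating from $e_0=D$ produces $e_{t+1}\le \rho^{S_{t+1}}D+\sum_{i=0}^{t}\rho^{S_{t+1}-S_{i+1}}Q_2^i$. The essential observation is that the prior stop criterion \eqref{eq:innercond1} pins $Q_2^i\le \rho^{S_{i+1}}D$ at the end of each inner scale, so every summand collapses to $\rho^{S_{t+1}}D$ and the sum contributes only a linear factor, giving $e_{t+1}\le (t+2)\rho^{S_{t+1}}D$. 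Carrying this one further step into the current inner scale and again using $Q_2^t\le Q_1^t=\rho^k D$ while the scale is active yields $\mathbb{E}[F(\bbx_k)-F(\bbx^*)]\le (t+2)\rho^k D$, i.e. the rate $\ccalO(t\rho^k)$.

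To specialize to the multiplicative rule I would then relate $t$ and $k$. With $n_t=m^t n_0$, the stop criterion at the end of the $t$-th scale reads $2^t\rho^{S_{t+1}}D\approx w/(2\ell m^t n_0)$ (post) or $\rho^{S_{t+1}}D\approx w/(2\ell m^t n_0)$ (prior); taking logarithms shows $S_{t+1}$ is affine in $t$, namely $k\approx \text{const}+tA$ with $A=\log_{\rho}(1/2m)$ (post) and $k\approx \text{const}+tB$ with $B=\log_{\rho}(1/m)$ (prior), so that $t\approx k/A$ and $t\approx k/B$ respectively. Substituting these into the two rates and using the identities $\rho=(2m)^{-1/A}$ and $\rho=m^{-1/B}$ forced by the definitions of $A$ and $B$ collapses $2^{k/A}\rho^k$ to $m^{-k/A}$ and leaves $(k/B)\rho^k$ untouched, which are exactly the claimed rates.

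The main obstacle is the prior unwinding: one must recognize that the accumulated error-neighborhood terms $\{Q_2^i\}$, which a priori could compound geometrically across time-scales, are each clamped by the stop criterion to the single decaying level $\rho^{S_{i+1}}D$, so their weighted sum grows only linearly rather than geometrically in $t$. The subsequent passage from $t$ to $k$ is only approximate because the integer durations $K_t$ and the unavoidable one-step overshoot in the stop inequality introduce a bounded multiplicative slack (a factor between $\rho$ and $1$) in the relation $\rho^{S_{t+1}}\approx Q_2^t/D$; this slack is absorbed into the $\ccalO(\cdot)$ constants, which is precisely why the theorem asserts the rates only up to such constants.
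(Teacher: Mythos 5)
Your proposal is correct and follows essentially the same route as the paper's proof: for the post scheme you bound the sub-optimality by $Q_1^t+Q_2^t\le 2\cdot 2^t(1-\ell/L)^kD$ via the stop criterion, for the prior you unwind Proposition \ref{prop1} across time-scales and use the stop criterion to clamp each accumulated $Q_2^i$ to $(1-\ell/L)^{\sum_{j\le i}K_j}D$, yielding the $(t+2)(1-\ell/L)^kD$ bound, and for the multiplicative rule you exploit the bound $K_t\le\lceil\log_{1-\ell/L}(1/2m)\rceil$ (resp. $\lceil\log_{1-\ell/L}(1/m)\rceil$) to make $t$ affine in $k$, exactly as the paper does. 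The only cosmetic difference is that for the post-multiplicative rate you simplify $2^{k/A}(1-\ell/L)^k$ directly via $(1-\ell/L)^A=1/(2m)$, whereas the paper routes through the error-neighborhood bound $w/(n_0m^{t-1}\ell)$; the two computations are algebraically equivalent.
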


\begin{proof}
See Appendix \ref{pr:prop3}.
\end{proof}

Theorem \ref{prop3} establishes that both the post and the prior TSA schemes have favorable rates under the premise of exact convergence. In particular, the prior obtains a faster rate but requires more samples per iteration. While the post needs fewer samples with the drawback of converging more slowly. Two sub-schemes imply two kinds of balances with different preferences between rates and variance. This result is a precursor to characterizing the sample complexity of TSA, which we do next.

%

We compare the sample complexity of TSA with SGD for an $\epsilon$-suboptimal solution. In order to simplify expressions and make a comparison possible, we hypothesize that both TSA and SGD make use of the optimal step-size $\alpha = 1/L$. The TSA uses the multiplicative rule \eqref{eq:multi} for augmenting the batch-size. Under these conditions, the sample complexity of TSA compared with SGD may be derived.

\begin{theorem} \label{thm2}
Consider the TSA scheme starting with the initial batch-size $n_0=1$, and the SGD with constant step-size $\alpha = 1/L$ and batch-size $n$. Under Assumptions \ref{asm1}-\ref{as3}, define the initial error $D:=F(\bbx_0)-F(\bbx^*)$. To achieve an $\epsilon$-suboptimal solution, the ratio $\gamma$ between the number of training samples required for TSA and SGD is
\begin{equation}\label{eq:thm2res}
\gamma \le 
\begin{cases}
\frac{ \frac{m}{m-1} \left\lceil \log_{1-\frac{\ell}{L}} \frac{1}{2m} \right\rceil+1}{  \left\lceil \log_{1-\frac{\ell}{L}}\frac{\epsilon}{2D} \right\rceil} + \mathcal{O}(\epsilon),  \!&\! \mbox{if the post,} \\
\! \frac{ \frac{m}{m\!-\!1} \!\left\lceil \!\log_{1\!-\!\frac{\ell}{L}} \frac{1}{m} \!\right\rceil\!+\! \left\lceil\!\log_{1 \!-\!\frac{\ell}{L}}\! \frac{1}{\log_{m}\!\left\lceil\! \frac{w}{\ell\epsilon}\! \right\rceil\!+\!1}\! \right\rceil\!+\!1\!}{  \left\lceil \log_{1-\frac{\ell}{L}}\frac{\epsilon}{2D} \right\rceil} + \mathcal{O}(\epsilon)\!, \!&\! \mbox{if the prior}
\end{cases}
\end{equation}
where $\lceil \cdot \rceil$ is the ceil function.
\end{theorem}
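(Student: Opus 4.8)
The plan is to compute the total number of training samples consumed by SGD and by each TSA sub-scheme separately as functions of the target accuracy $\epsilon$, and then form the ratio. Throughout I fix $\alpha=1/L$, so the per-iteration contraction factor is $1-\ell/L$ and, since $n_0=1$, the multiplicative schedule \eqref{eq:multi} is geometric, $n_t=m^t$.

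First I would pin down the denominator, the SGD cost. Applying Corollary \ref{coro1} with $\alpha=1/L$ gives the two-term bound $\mathbb{E}[F(\bbx_{k+1})-F(\bbx^*)]\le(1-\ell/L)^{k+1}D+\frac{w}{2n\ell}$. To certify an $\epsilon$-suboptimal point I force each term below $\epsilon/2$: the neighborhood term fixes the batch-size at $n_{\mathrm{SGD}}=\lceil w/(\ell\epsilon)\rceil$, and the rate term fixes the iteration count at $k_{\mathrm{SGD}}=\lceil\log_{1-\ell/L}(\epsilon/2D)\rceil$. Since SGD spends $n_{\mathrm{SGD}}$ samples per iteration, its cost is $n_{\mathrm{SGD}}k_{\mathrm{SGD}}$, exactly the denominator of \eqref{eq:thm2res}.

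Next the TSA cost. Each inner time-scale $t$ runs at batch-size $m^t$ for a duration $K_t$ dictated by the rule $Q_1^t\ge Q_2^t$; solving this (as in the rate analysis behind Theorem \ref{prop3}) shows that between consecutive batch increases $Q_1$ must fall by the same factor $m$ by which $Q_2$ drops, so the durations are essentially constant, $K_t\approx A:=\log_{1-\ell/L}(1/2m)$ for the post and $K_t\approx B:=\log_{1-\ell/L}(1/m)$ for the prior. The number of stages is the $T$ with $m^T\approx n_{\mathrm{SGD}}$, and summing the per-stage samples gives a geometric series $\sum_t m^t K_t$ dominated by its last term, $\frac{m}{m-1}m^T K\approx\frac{m}{m-1}n_{\mathrm{SGD}}K$ with $K\in\{A,B\}$. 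Dividing by $n_{\mathrm{SGD}}k_{\mathrm{SGD}}$ already produces the leading numerator $\frac{m}{m-1}\lceil\log_{1-\ell/L}(1/2m)\rceil$ (post) or $\frac{m}{m-1}\lceil\log_{1-\ell/L}(1/m)\rceil$ (prior) over $k_{\mathrm{SGD}}$; the constant stage $K_0$ and the lower-order part of the geometric sum, once divided by $n_{\mathrm{SGD}}k_{\mathrm{SGD}}\sim1/\epsilon$, contribute only the additive $\mathcal{O}(\epsilon)$ remainder.

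The delicate part, and the origin of the extra prior term, is that the stopping rule is phrased through the surrogate $Q_1^t$, whereas $\epsilon$-suboptimality must be checked against the \emph{true} expected sub-optimality of Proposition \ref{prop1}. For the post, the deliberate factor $2^t$ in $Q_1^t$ keeps the accumulated variance $\sum_i\frac{w}{2Ln_i}(1-\ell/L)^{k-i}$ at a constant multiple of $Q_2^T$, so certification costs only $\mathcal{O}(1)$ extra iterations — the $+1$ in the post numerator. For the prior, which ignores these accumulated terms, each of the $T+1$ geometric stages contributes comparably to the residual, so the true error scales like $(T+1)Q_2^T$; pushing it below $\epsilon/2$ forces the final, largest-batch stage to run an additional $\approx\log_{1-\ell/L}\frac{1}{T+1}$ iterations, with $T+1\approx\log_m\lceil w/(\ell\epsilon)\rceil+1$. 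As these extra iterations occur at batch-size $n_{\mathrm{SGD}}$, they add $\lceil\log_{1-\ell/L}\frac{1}{\log_m\lceil w/(\ell\epsilon)\rceil+1}\rceil\,n_{\mathrm{SGD}}$ to the TSA cost, which is precisely the prior's second numerator term. I expect this accumulated-neighborhood estimate — summing the geometric variance contributions across stages and showing it grows like $(T+1)Q_2^T$ for the prior but stays $\mathcal{O}(Q_2^T)$ for the post — to be the main obstacle; the rest is bookkeeping of the ceiling functions and collecting the constant and $1/\epsilon$-suppressed terms into $\mathcal{O}(\epsilon)$.
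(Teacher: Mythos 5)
Your proposal is correct and follows essentially the same route as the paper's proof: the same $\epsilon/2$--$\epsilon/2$ split giving $N_{\mathrm{SGD}}=\lceil \log_{1-\ell/L}(\epsilon/2D)\rceil\lceil w/(\ell\epsilon)\rceil$, the same per-stage duration bounds $\lceil\log_{1-\ell/L}(1/2m)\rceil$ (post) and $\lceil\log_{1-\ell/L}(1/m)\rceil$ (prior), the same $\frac{m}{m-1}$ geometric-sum bound with $n_t=m^t$, and the same certification step where the post pays one extra iteration while the prior pays $\lceil\log_{1-\ell/L}\tfrac{1}{\log_m\lceil w/(\ell\epsilon)\rceil+1}\rceil$ extra iterations at the final batch-size to beat down the $(t+1)$-fold accumulated error, with all residual terms absorbed into $\mathcal{O}(\epsilon)$. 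The only loose phrasing (e.g., ``$Q_1$ must fall by the same factor $m$'' for the post, where the correct factor is $2m$) is already fixed by the constants you actually use, so nothing substantive is missing.
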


\begin{proof}
See Appendix \ref{pr:thm2}.
\end{proof}

Observe from Theorem \ref{thm2}, the ratio of sample complexity of TSA to SGD $\gamma$, is approximately proportional to $\ccalO(-1/\log\epsilon) + \ccalO(\epsilon)$, meaning that for a large $\epsilon$, the complexity reduction is not much. However, for more accurate solutions, i.e., $\epsilon$ close to null, the logarithmic term dominates and a significant reduction in the number of required samples may be attained. Due to the complicated dependence on problem constants, we present a corollary for the post TSA with $m=2$ which simplifies the expressions and provides more intuitive descriptions.
\begin{corollary} \label{coro2}
With the same settings of Theorem \ref{thm2}, to achieve an $\epsilon$-suboptimal solution, the ratio $\gamma$ between the number of training samples required for the post TSA and SGD is
\begin{equation} \label{eq:thm2res1}
\gamma \le \frac{\left\lceil \log_{1-\frac{\ell}{L}} \frac{(L-\ell)^2}{16L^2} \right\rceil}{ \left\lceil \log_{1-\frac{\ell}{L}}\frac{\epsilon}{2D} \right\rceil} + \mathcal{O}(\epsilon)
\end{equation}
\end{corollary}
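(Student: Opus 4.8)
The plan is to obtain \eqref{eq:thm2res1} as a direct specialization of the post branch of Theorem \ref{thm2} at $m=2$, followed by a small amount of ceiling arithmetic that folds the two separate ceiling terms into the single constant appearing in the corollary. First I would set $m=2$ in the post case of \eqref{eq:thm2res}, so that $\frac{m}{m-1}=2$ and $\frac{1}{2m}=\frac14$. The numerator then collapses to $2\lceil \log_{1-\ell/L}\frac14\rceil + 1$, while the denominator $\lceil\log_{1-\ell/L}\frac{\epsilon}{2D}\rceil$ and the $\ccalO(\epsilon)$ remainder are inherited verbatim. Thus it suffices to show that this numerator is bounded above by $\lceil \log_{1-\ell/L}\frac{(L-\ell)^2}{16L^2}\rceil$, since the denominators coincide and enlarging the numerator only weakens the bound.

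Next I would rewrite the target constant. Writing $\rho := 1-\ell/L \in (0,1)$ (recall $0<\ell<L$), observe that $\frac{(L-\ell)^2}{16L^2} = \frac{1}{16}(1-\ell/L)^2 = \rho^2/16$, so the logarithm identity $\log_\rho(\rho^2 y) = 2 + \log_\rho y$ gives $\log_\rho\frac{(L-\ell)^2}{16L^2} = 2 + \log_\rho\frac{1}{16} = 2 + 2\log_\rho\frac14$, where the last step uses $\log_\rho(1/16)=2\log_\rho(1/4)$. Hence $\lceil \log_{1-\ell/L}\frac{(L-\ell)^2}{16L^2}\rceil = 2 + \lceil 2\log_\rho\frac14\rceil$, and the comparison reduces to the integer inequality $2\lceil x\rceil + 1 \le 2 + \lceil 2x\rceil$ with $x := \log_\rho\frac14 > 0$ (note $x>0$ since both $\ln\frac14$ and $\ln\rho$ are negative).

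It then remains to verify $2\lceil x\rceil - 1 \le \lceil 2x\rceil$. I would do this by writing $x = n+f$ with $n=\lfloor x\rfloor$ and $f\in[0,1)$ and checking the three cases $f=0$, $0<f\le\frac12$, and $\frac12<f<1$ separately; in each case both sides are computed explicitly and the inequality holds (with equality in the middle case). Combining this with the numerator substitution and the denominator match yields $\gamma \le \frac{\lceil \log_{1-\ell/L}\frac{(L-\ell)^2}{16L^2}\rceil}{\lceil \log_{1-\ell/L}\frac{\epsilon}{2D}\rceil} + \ccalO(\epsilon)$, which is \eqref{eq:thm2res1}.

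The main obstacle is precisely this last ceiling bookkeeping: stripped of the ceilings the two numerators are $2x+1$ and $2x+2$, differing by exactly $1$, so the inequality is tight and cannot be dismissed by a crude over-estimate — it genuinely needs the case analysis on the fractional part of $x$. Everything else is routine substitution together with the single logarithm identity $\log_\rho(\rho^2 y)=2+\log_\rho y$.
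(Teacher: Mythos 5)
Your proof is correct and follows essentially the same route as the paper: substitute $m=2$ into the post branch of Theorem~\ref{thm2}, apply the ceiling inequality $2\left\lceil x\right\rceil + 1 \le \left\lceil 2x\right\rceil + 2$ with $x = \log_{1-\ell/L}\tfrac{1}{4}$, and absorb the additive $2$ into the ceiling via $\tfrac{(L-\ell)^2}{16L^2} = (1-\ell/L)^2/16$. The only difference is that the paper asserts this ceiling inequality as a bare ``fact'' whereas you verify it by case analysis on the fractional part of $x$ -- a minor but welcome filling-in of a detail the paper leaves unproved.
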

\begin{proof}
See Appendix \ref{pr:coro2}.
\end{proof}

More concrete takeaways can be discerned from Corollary \ref{coro2}. For a small $\epsilon$, the second term $\mathcal{O}(\epsilon)$ of \eqref{eq:thm2res1} is negligible. Moreover, the simplified first term indicates that provided the rate $1-\ell/L<1$, the ratio $\gamma$ between required training samples of TSA and SGD is less than $1$ as long as $\epsilon/(2D) \le (L-\ell)^2/(16L^2)$, i.e., $\gamma \le 1$ whenever $\epsilon \le D(1-\ell/L)^2/8$, which is almost always true unless the initial point is very close to the optimizer. Therefore, in practice, the sample complexity of TSA is largely reduced compared with SGD, while both of them achieve the same suboptimal solution. The magnitude of the reduction depends on problem constants, but is proportional to the sum of minus inverse of logarithmic factor of $\epsilon$ and $\epsilon$, which may be substantial.

\begin{remark} \normalfont
The batch-size $n$ of SGD in Theorem \ref{thm2} needs to be preset based on the required suboptimality $\epsilon$ at the outset, i.e., $n$ is determined by $\epsilon$. Otherwise, if $n$ is set smaller than that determined by $\epsilon$, SGD may never achieve an $\epsilon$-suboptimal solution due to the large error neighborhood. While if $n$ is set larger than that determined by $\epsilon$, it will waste more samples in the training. Therefore, it indicates another disadvantage of SGD. TSA overcomes this issue with no need to determine any parameter at the beginning. 
\end{remark}

\subsection{Non-convex Problems}

The TSA scheme for non-convex problems also exhibits comparable asymptotic properties. Specifically, Theorem \ref{thm3} shows that the sequence of decision vectors $\bbx_k$ generated by the non-convex TSA scheme converges to a stationary solution $\bbx^*$ in expectation.

\begin{theorem} \label{thm3}
Consider the post and the prior TSA schemes for non-convex problems. If the objective functions satisfy Assumptions \ref{asm1}-\ref{as2}, the sequence of $\bbx_k$ converges to a stationary solution $\bbx^*$ in expectation, i.e.,
 \begin{equation}
\begin{split}
&\lim_{k \to \infty} \mathbb{E}\left[ \| \nabla F(\bbx_{k})\|_2^2 \right] = 0.
\end{split}
\end{equation}
\end{theorem}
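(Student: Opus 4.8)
The plan is to derive everything from Proposition \ref{prop2}, specialized to the TSA choice $\alpha=1/L$, and to show that the resulting bound forces the gradient to vanish. With $\alpha=1/L$ one has $\hat r(\alpha)=1/(2L)$, so the bound \eqref{eq:prop2} becomes
\[
\min_{0\le i\le k-1}\mathbb{E}\!\left[\|\nabla F(\bbx_i)\|_2^2\right]\le \frac{2L}{k}\left(F(\bbx_0)-F(\bbx^*)\right)+\frac{1}{k}\sum_{i=0}^{k-1}\frac{w}{n_i}.
\]
The whole argument reduces to proving that both terms on the right tend to zero and then promoting the resulting convergence of the running minimum to the stated pointwise limit.

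First I would establish that the batch-size diverges, $n_t\to\infty$. Within the $t$-th inner time-scale the batch is held fixed at $n_t$, so $Q_2^t=w/n_t$ is constant while $Q_1^t$ (which is $\propto 1/(k_t+K)$ for the prior, and nonincreasing for the post) strictly decreases; hence the stopping condition $Q_1^t<Q_2^t$ of \eqref{eq:noninnercond2} is met after finitely many iterations. Consequently there are infinitely many outer time-scales, and either update rule \eqref{eq:add} or \eqref{eq:multi} drives $n_t\to\infty$. This makes $w/n_i$ a null sequence, so by a Stolz--Ces\`aro argument its running average $\tfrac1k\sum_{i<k}w/n_i\to 0$, while the first term $2L(F(\bbx_0)-F(\bbx^*))/k\to 0$ trivially. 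Therefore $\min_{0\le i\le k-1}\mathbb{E}[\|\nabla F(\bbx_i)\|_2^2]\to 0$, i.e. $\liminf_k\mathbb{E}[\|\nabla F(\bbx_k)\|_2^2]=0$. The same estimate covers the post scheme, the only difference being the additional historical variance terms in $Q_1^t$ of \eqref{eq:nonerror_conditions20}, which are absorbed into the identical Ces\`aro average.

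The main obstacle is upgrading this running-minimum statement to the claimed limit, since in the non-convex regime $\mathbb{E}[\|\nabla F(\bbx_k)\|_2^2]$ need not be monotone. For this I would return to the one-step descent inequality underlying Proposition \ref{prop2}, namely $\mathbb{E}[F(\bbx_{k+1})]\le \mathbb{E}[F(\bbx_k)]-\tfrac{1}{2L}\mathbb{E}[\|\nabla F(\bbx_k)\|_2^2]+\tfrac{w}{2Ln_k}$, and argue by contradiction: if $\limsup_k\mathbb{E}[\|\nabla F(\bbx_k)\|_2^2]=2\delta>0$ while the liminf is $0$, then $\mathbb{E}[\|\nabla F(\bbx_k)\|_2^2]$ repeatedly crosses the band $[\delta,2\delta]$, and along each such crossing the descent term produces a definite decrease of $\mathbb{E}[F(\bbx_k)-F(\bbx^*)]$ that, once $n_k$ is large enough for the error term $w/(2Ln_k)$ to be dominated, cannot be recovered, contradicting $F(\bbx_k)\ge F(\bbx^*)$. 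The delicate point is precisely that the TSA stopping rule ties the scale length to the batch growth, giving $K_t$ of order $n_t-n_{t-1}$ so that $\sum_k 1/n_k$ diverges only marginally; hence one cannot invoke Robbins--Siegmund summability directly and must instead exploit $n_k\to\infty$ to control the accumulated error against the descent generated by a non-vanishing gradient.
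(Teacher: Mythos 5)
Your first two paragraphs are correct and are essentially the paper's own argument. The specialization of Proposition \ref{prop2} to $\alpha=1/L$, the observation that each inner time-scale terminates after finitely many iterations so that $t\to\infty$ and hence $n_t\to\infty$ under either rule \eqref{eq:add} or \eqref{eq:multi}, and the Ces\`aro/Stolz argument making both terms of the bound vanish are exactly what Appendix \ref{pr:thm3} does (the paper treats the post scheme by factoring out $\sum_j K_j+1$ and invoking the stop criterion \eqref{eq:noninnercond2}, and disposes of the prior by a one-line comparison; your uniform treatment of both sub-schemes through the common bound is cleaner). Where you part ways with the paper is that you correctly notice this only controls $\min_{0\le i\le k-1}\mathbb{E}[\|\nabla F(\bbx_i)\|_2^2]$, hence yields only $\liminf_k \mathbb{E}[\|\nabla F(\bbx_k)\|_2^2]=0$; the paper passes silently from the running-minimum bound \eqref{prthm32} to the pointwise limit and never performs the upgrade you attempt in your last paragraph.

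Unfortunately, that upgrade is precisely where your proposal has a genuine gap, and it is at the point you yourself flag as delicate. Summing the one-step inequality $\mathbb{E}[F(\bbx_{k+1})]\le\mathbb{E}[F(\bbx_k)]-\tfrac{1}{2L}\mathbb{E}[\|\nabla F(\bbx_k)\|_2^2]+\tfrac{w}{2Ln_k}$ and using $F\ge F(\bbx^*)$ gives $\sum_{k<K}\mathbb{E}[\|\nabla F(\bbx_k)\|_2^2]\le 2L\left(F(\bbx_0)-F(\bbx^*)\right)+\sum_{k<K} w/n_k$, and under the TSA schedule the right-hand side diverges like $\log K$: for the multiplicative rule $K_t w/n_t$ is of constant order per time-scale (cf. \eqref{eq:prprop49}), and for the additive rule $K_t$ stays bounded while $n_t$ grows linearly, so $\sum_{k<K}w/n_k\sim c\log K$ in both cases. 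A contradiction would therefore require the number of iterations with $\mathbb{E}[\|\nabla F(\bbx_k)\|_2^2]\ge\delta$ to exceed order $(c/\delta)\log K$; but $\limsup_k \mathbb{E}[\|\nabla F(\bbx_k)\|_2^2]=2\delta>0$ only guarantees infinitely many such iterations, which can be as sparse as one per time-scale, i.e., $O(\log K)$ of them. The guaranteed descent and the accumulated error then grow at the same logarithmic rate, with relative constants depending on $\delta$, $w$, and $m$ (or $\beta$), so for small $\delta$ the "definite unrecoverable decrease" you invoke can be entirely paid for by the error budget and the band-crossing contradiction never materializes (note also that $F$ increasing along the way contradicts nothing, since $F(\bbx^*)$ is a lower bound, not an upper bound). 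In short: the liminf statement is provable, and it is all the paper itself actually establishes; the pointwise limit claimed in the theorem does not follow from the descent inequality plus the TSA batch schedule by your argument, and closing it would require either a different mechanism or restating the conclusion in terms of the running minimum or a subsequence.
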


\begin{proof}
See Appendix \ref{pr:thm3}
\end{proof}

Theorem \ref{thm3} shows that the TSA scheme attains the exact convergence as the SGD with attenuating step-size for non-convex problems. We follow to establish the convergence rate of non-convex TSA. Note that since $Q_1^t$ of the post TSA in \eqref{eq:nonerror_conditions20} contains historical information about durations of previous inner time-scales, it is challenging to theoretically analyze the rate of the post TSA. Thus, we focus on the prior TSA here.

\begin{theorem} \label{prop4}
Consider the prior TSA scheme for non-convex problems. If the objective functions satisfy Assumptions \ref{asm1}-\ref{as2}, the prior TSA converges approximately with a rate of $\ccalO(\log k /k)$.
\end{theorem}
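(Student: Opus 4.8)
The plan is to feed the piecewise-constant batch-size schedule produced by the prior TSA into the general bound of Proposition \ref{prop2} and then control the accumulated variance term. Setting $\alpha = 1/L$ so that $\hat{r}(1/L) = 1/(2L)$, Proposition \ref{prop2} gives, at any iteration $k$,
\[
\min_{0 \le i \le k-1} \mathbb{E}\left[\|\nabla F(\bbx_i)\|_2^2\right] \le \frac{2LD}{k} + \frac{1}{k}\sum_{i=0}^{k-1}\frac{w}{n_i},
\]
where $D = F(\bbx_0) - F(\bbx^*)$ and $n_i$ is the batch-size in force at iteration $i$. The first term is already $\ccalO(1/k)$, so the whole theorem reduces to showing that the cumulative variance $\sum_{i=0}^{k-1} w/n_i$ grows only like $\ccalO(\log k)$.

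First I would pin down the durations of the inner time-scales. For the prior sub-scheme, $Q_1^t = 2LD/(k_t+K)$ and $Q_2^t = w/n_t$ as in \eqref{eq:nonerror_conditions3}, with $K = \sum_{i=0}^{t-1}K_i$ the iteration count entering the $t$-th stage; the stopping rule \eqref{eq:noninnercond2}, $K_t = \max_{k_t}\{Q_1^t \ge Q_2^t\}$, then yields $K_t + K = \lfloor 2LD\, n_t/w\rfloor$, i.e. the total iteration count upon completing the $t$-th stage is $\sum_{i=0}^{t} K_i \approx 2LD\, n_t/w$. Specializing to the multiplicative rule \eqref{eq:multi}, $n_t = m^t n_0$, this gives $K_t \approx (2LD\, n_0/w)(m^t - m^{t-1})$, and hence the crucial cancellation
\[
K_t \cdot \frac{w}{n_t} \approx 2LD\left(1 - \frac{1}{m}\right),
\]
so that each completed inner time-scale contributes a constant amount to the cumulative variance, independent of $t$.

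Summing over the at most $t+1$ inner time-scales visited before iteration $k$ — the last one only partially — I would conclude $\sum_{i=0}^{k-1} w/n_i \le (t+1)\cdot 2LD(1-1/m) + \ccalO(1) = \ccalO(t)$. It then remains to relate the stage index $t$ to the iteration $k$: since $k$ lies between $\sum_{i=0}^{t-1}K_i \approx 2LD\, m^{t-1}n_0/w$ and $\sum_{i=0}^{t}K_i \approx 2LD\, m^{t}n_0/w$, one has $t = \log_m\!\big(kw/(2LD\, n_0)\big) + \ccalO(1) = \ccalO(\log k)$. Substituting back gives $\sum_{i=0}^{k-1} w/n_i = \ccalO(\log k)$ and therefore the advertised bound $\ccalO(\log k/k)$.

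The main obstacle is the bookkeeping around the floor in the duration $K_t$ and the partial final stage: the clean identity $K_t \cdot w/n_t \approx 2LD(1-1/m)$ is only approximate, and I would need to carry the rounding errors (each of order $w/n_t$, hence geometrically decaying in $t$) through the telescoping sum to confirm they do not disturb the $\ccalO(\log k)$ estimate. Handling the initialization stage $t=0$ separately and verifying that the contribution of the current partial stage stays bounded by the same constant are the remaining details. The reason the argument stays tractable is precisely the structural feature noted after \eqref{eq:nonerror_conditions3}: the prior carries only the single initial term $2LD/(k_t+K)$ in $Q_1^t$, rather than the accumulated variance appearing in the post scheme's $Q_1^t$, which is what collapses the cumulative variance to $\ccalO(\log k)$ instead of something larger.
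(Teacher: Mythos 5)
Your reduction of the theorem to the estimate $\sum_{i=0}^{k-1} w/n_i = \ccalO(\log k)$ is sound, and your treatment of the multiplicative rule is correct: the stopping-criterion identity $\sum_{i=0}^{t}K_i = \lfloor 2LD\, n_t/w\rfloor$ is exactly the paper's \eqref{eq:prprop49} (the paper's $D$ there absorbs the factor $2L$), your ``constant contribution per completed stage'' $K_t\cdot w/n_t \approx 2LD(1-1/m)$ is the same quantity the paper tracks in the form $K_i/\sum_{j\le i}K_j \approx (m-1)/m$ in \eqref{eq:prprop410}, and your geometrically decaying rounding errors are the paper's convergent correction series \eqref{eq:prprop411}. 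So for the rule \eqref{eq:multi} what you have is, up to bookkeeping, the paper's own argument reorganized around the cumulative variance rather than the ratios $K_i/\sum_j K_j$.

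The gap is that the theorem is not restricted to the multiplicative rule: the prior TSA may also augment the batch-size additively via \eqref{eq:add}, and the paper's proof (together with the discussion immediately after the theorem) covers both cases. Your crucial cancellation is specific to geometric growth of $n_t$ and fails for $n_t = n_0 + t\beta$: there the stopping identity gives stages of roughly constant length, $K_t \approx 2LD\beta/w$, so the per-stage contribution $K_t\cdot w/n_t \approx 2LD\beta/(n_0+t\beta)$ is not constant but decays like $1/t$, while the number of stages visited before iteration $k$ grows linearly, $t = \Theta(k)$, rather than logarithmically. The $\ccalO(\log k)$ bound on the cumulative variance then comes from a different mechanism, namely the harmonic series $\sum_{s=1}^{t} 1/s = \ln(t+1)+\ccalO(1)$, which is what the paper sets up through the induction bounds $C_L \le K_t \le C_U$ in \eqref{eq:prprop43} and the estimate \eqref{eq:prprop46}. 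Your framework accommodates this fix (sum $K_s\, w/n_s$ over stages and bound it by a harmonic sum), but as written the additive half of the theorem is unproved.
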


\begin{proof}
See Appendix \ref{pr:prop4}.
\end{proof}

\begin{figure*}%
\centering
\begin{subfigure}{0.33\columnwidth}
\includegraphics[width=1.1\linewidth, height = 0.83\linewidth]{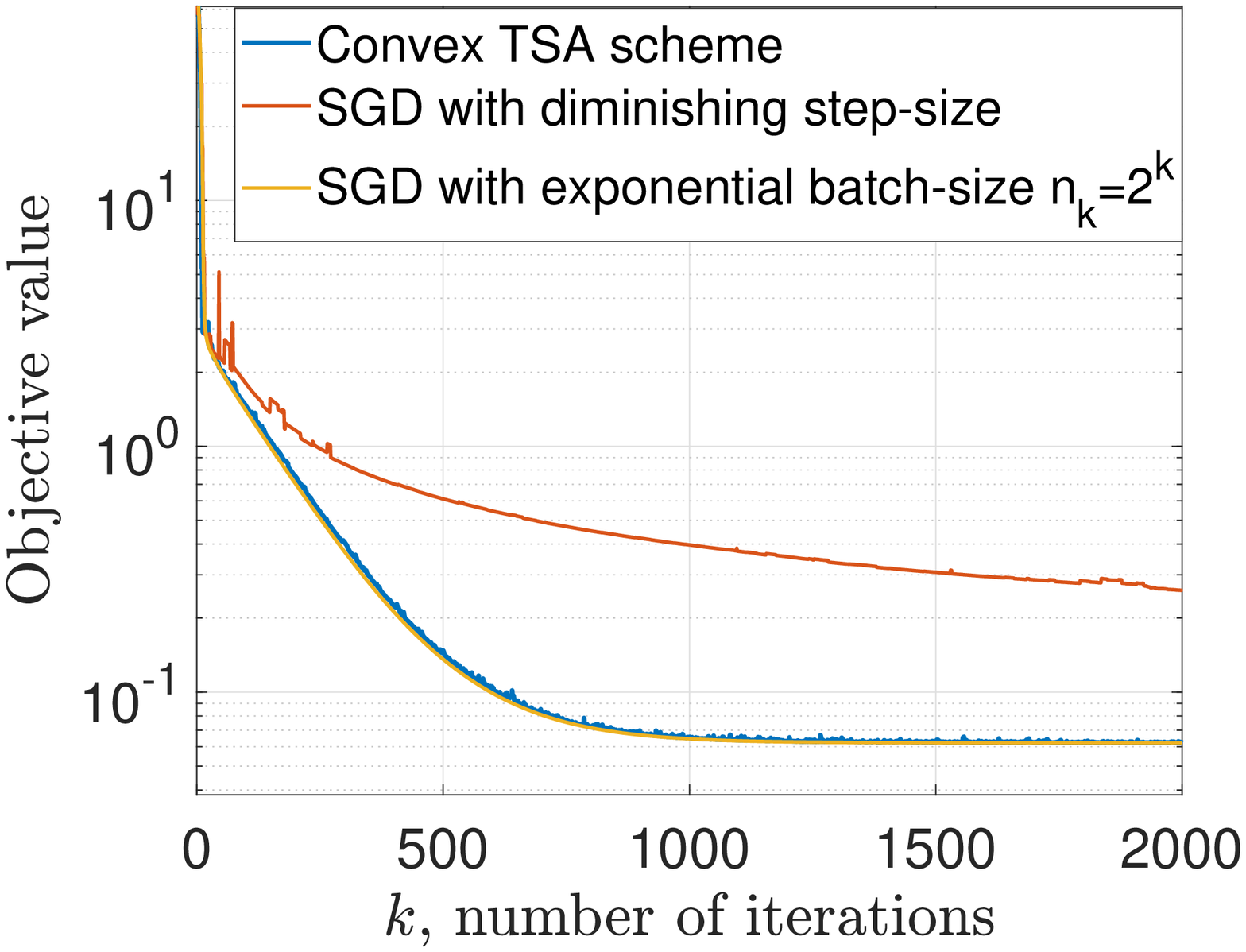}%
\caption{}%
\label{subfiga_vary_n}%
\end{subfigure}\hfill\hfill%
\begin{subfigure}{0.33\columnwidth}
\includegraphics[width=1.1\linewidth,height = 0.83\linewidth]{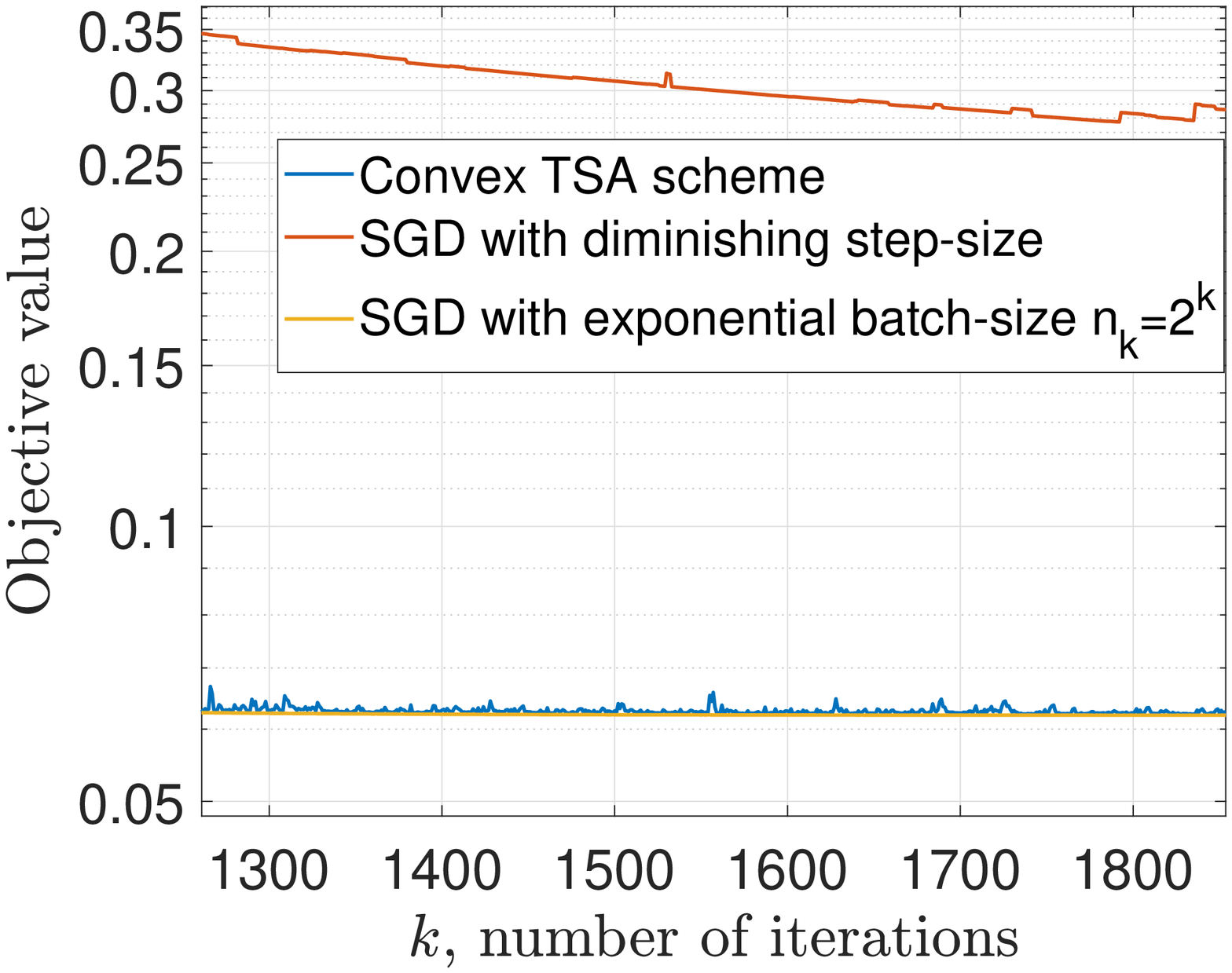}%
\caption{}%
\label{subfigb_vary_n}%
\end{subfigure}\hfill\hfill%
\begin{subfigure}{0.33\columnwidth}
\includegraphics[width=1.05\linewidth,height = 0.83\linewidth]{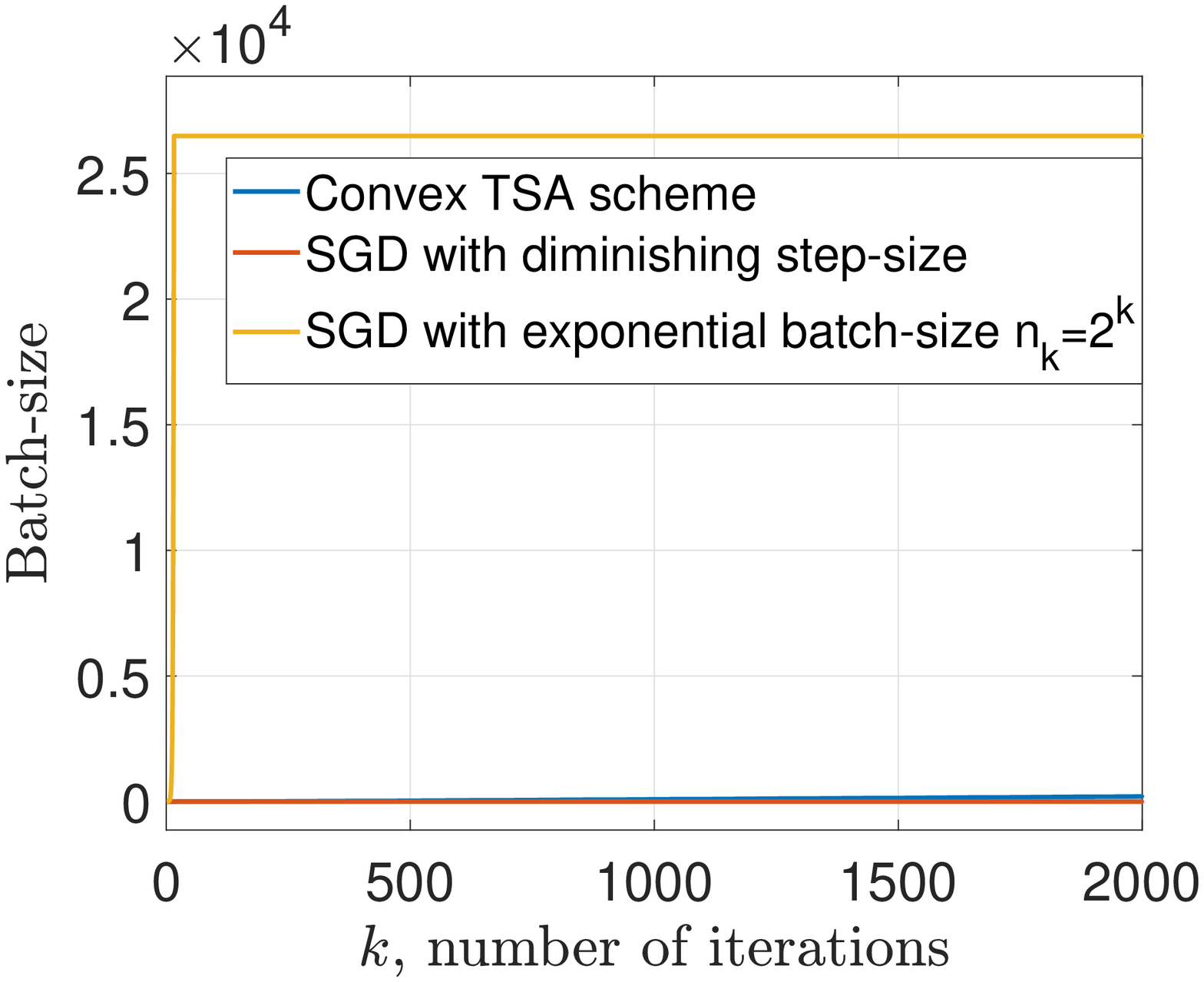}%
\caption{}%
\label{subfigc_vary_n}%
\end{subfigure}%
\caption{Linear modle on MNIST logistic regression with three exact convergent algorithms: convex TSA, SGD with diminishing step-size $\alpha_k = \min(1/L, 100/(Lk))$ and SGD with exponential batch-size $n_k=2^k$: (a) Objective value as a function of training iterations; (b) The zoomed version of (a) showing detailed performance differences; (c) Batch-size required per iteration as a function of training iterations. }\label{fig:vary_n}
\end{figure*}

Proposition \ref{prop4} establishes that the prior TSA, with either the additive or the multiplicative rule for augmenting batch-size, enjoys a much faster rate of $\ccalO(\log k /k)$ compared with $\ccalO(1/\sqrt{k})$ of SGD with diminishing step-size for non-convex problems, while still maintaining the exact convergence. In terms of the post TSA, though with no exact characterization, it converges slightly slowly than the prior but requires less samples per iteration. Together, the post and the prior give two strategies to evolve parameters for the balance between rates and variance.

The exact sample complexity of non-convex TSA is challenging to characterize because $Q_1^t$ of the post TSA in \eqref{eq:nonerror_conditions20} contains historical information and there is no closed-form solution of $\log k /k = \epsilon$ for the prior TSA. However, with observations from Theorem \ref{prop4}, we may refer that the prior TSA converges close to the theoretical optimal rate $\ccalO(1/k)$ of true gradient descent algorithm but will use fewer samples per iteration. Thus, the sample complexity shall be reduced under the exact convergence with such a fast rate; something we will verify in the numerical experiments in Section \ref{sec:sims}.


Overall, the TSA scheme provides a strategy to evolve SGD parameters for both convex and non-convex problems. It selects the step-size and batch-size to preserve a fast convergence rate while repeatedly reducing the stochastic approximation variance during the training process. Under the proposed criterion, the batch-size increases only when necessary, allowing provable sample complexity reduction relative to classical SGD schemes in convex problems. For non-convex problems, it achieves a faster convergence rate than SGD with attenuating step-size under the premise of exact convergence, and saves the sample complexity as much as possible. Together, it well balances the rate and variance in SGD and exhibits an improved performance theoretically. We summarize main results of TSA for convex and non-convex problems in Table \ref{Tab03} and investigate the experimental implications of these results in the subsequent section.

\begin{remark}\normalfont
Two sub-schemes, the post TSA and the prior TSA, imply different emphasis within the balance, where the former acquires a faster rate while the latter achieves lower sample complexity. The choice of which sub-scheme depends on specific problems. For problems with small variance, i.e, target sub-optimality can be obtained without large samples, the post is preferred since it increases the batch-size slowly to save more unnecessarily wasted samples. For problems with large variance, one may then use the prior that increases the batch-size faster for a faster rate.
\end{remark}

\begin{figure*}%
\centering
\begin{subfigure}{0.33\columnwidth}
\includegraphics[width=1.1\linewidth, height = 0.83\linewidth]{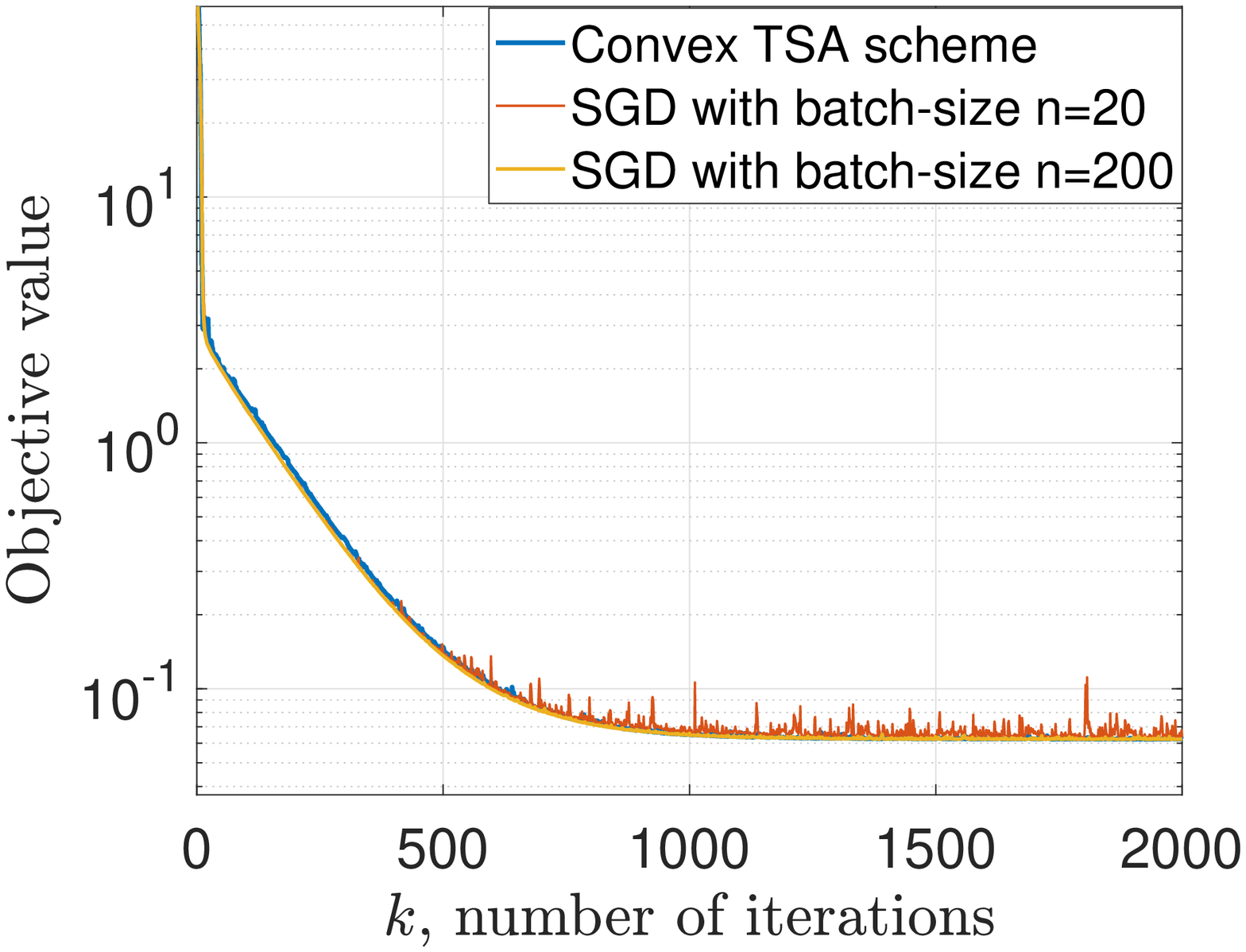}%
\caption{}%
\label{subfiga_convex}%
\end{subfigure}\hfill\hfill%
\begin{subfigure}{0.33\columnwidth}
\includegraphics[width=1.1\linewidth,height = 0.83\linewidth]{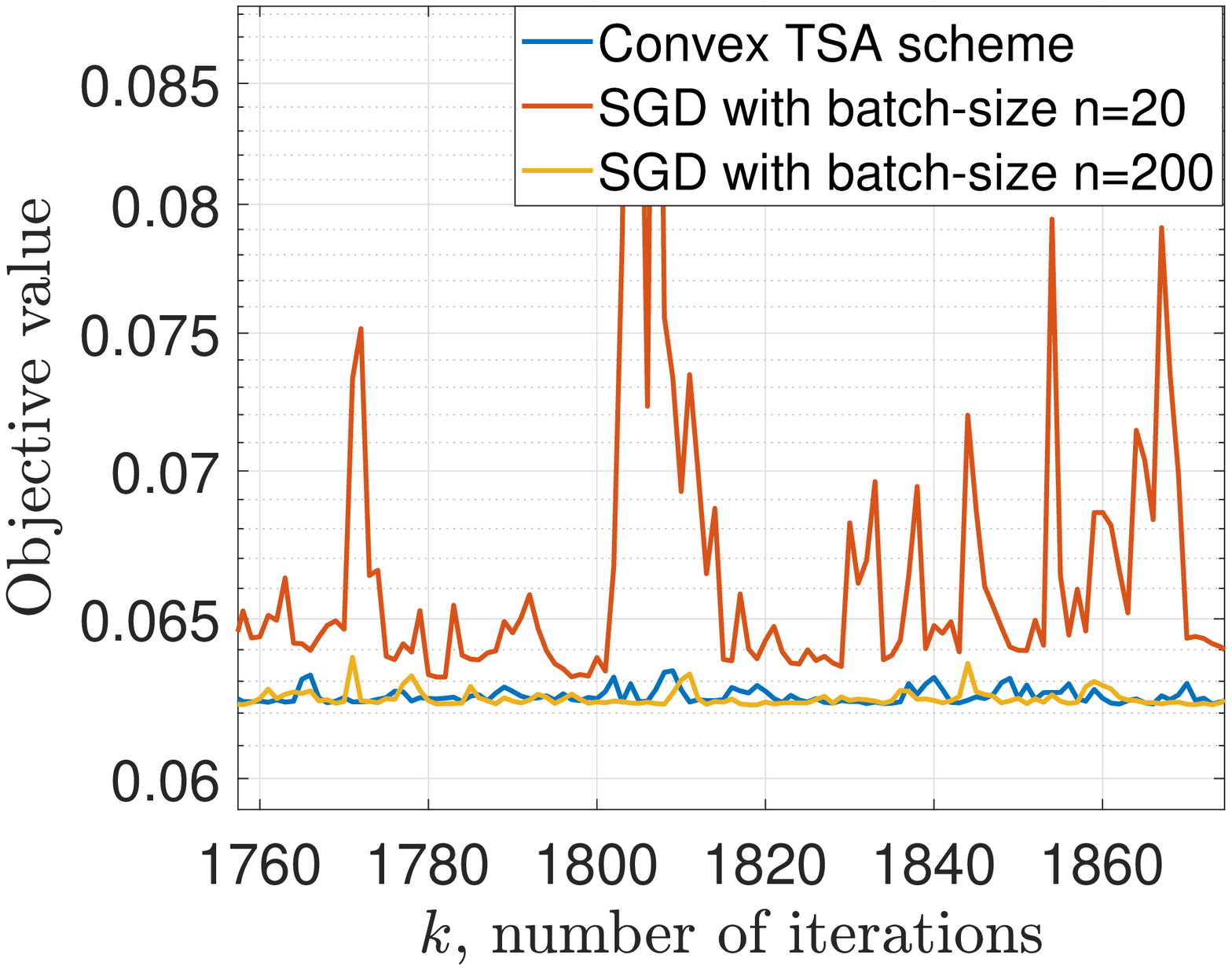}%
\caption{}%
\label{subfigb_convex}%
\end{subfigure}\hfill\hfill%
\begin{subfigure}{0.33\columnwidth}
\includegraphics[width=1.05\linewidth,height = 0.83\linewidth]{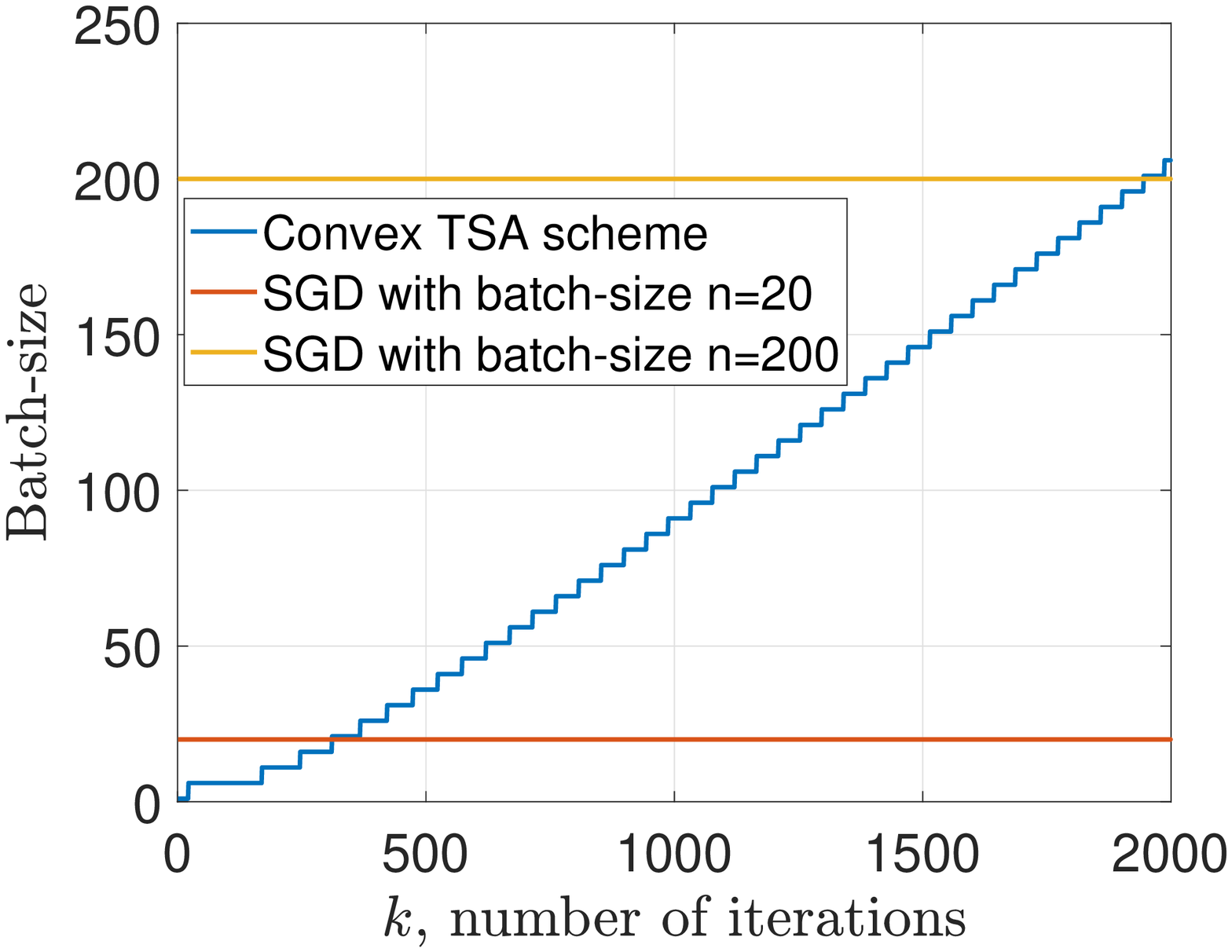}%
\caption{}%
\label{subfigc_convex}%
\end{subfigure}%
\caption{Linear modle on MNIST logistic regression with convex TSA, and two approximate convergent algorithms: SGD with constant batch-size $n=200$ and SGD with constant batch-size $n=20$: (a) Objective value as a function of training iterations; (b) The zoomed version of (a) showing detailed performance differences; (c) Batch-size required per iteration as a function of training iterations. }\label{fig:convex}
\end{figure*}

\begin{table}[t] 
\begin{center}  
\caption{Samples required for loss $0.063$ for SGD with $n=20$, convex TSA, SGD with $n = 200$, and SGD with $n_k=2^k$. Relative sample efficiency computed as the ratio of samples required with respect to SGD with $n_k=2^k$.}  
\label{table2}
\begin{tabular}{|l|l|l|l| p{2cm}|}  
\hline  
Target loss: $0.063$ & Required samples  & Relative sample efficiency \\ \hline  
SGD with $n=20$ & $\infty$ & 0 \\ \hline  
TSA &  60933 & 530 \\  \hline
SGD with $n=200$ & 234800 & 138 \\ \hline
SGD with $n_k = 2^k$ & 32298805 & 1 \\
\hline  
\end{tabular}  
\end{center}  \vspace{-4mm}
\end{table}


\section{Numerical Experiments} \label{sec:sims}


We numerically evaluate the TSA scheme compared with standard SGD schemes. Without particular description, the default step-size of SGD is the same optimal step-size as TSA for clear comparison.

The visual classification problem of hand-written digits is considered for both convex and nonconvex cases on the MNIST data \cite{lecun-mnisthandwrittendigit-2010}. Given the training data $\mathcal{T} = \{ (\bbz_n, y_n) \}_{n=1}^N$, let $\bbz \in \mathbb{R}^p$ be the feature vector of digit image and $y \in \{ 0,1,\dots,c \}$ its associated label denoting which number is written, and $c=9$ denotes the number of classes (minus 1). Denote by $\bbx \in \mathbb{R}^p$ the parameters of a classifier $h(\bbz)$ which models the relationship between features $\bbz$ and label $y$. In particular, for the convex case, we consider a linear model $h(\bbz)=\bbx^T\bbz$ whose parameters $\bbx$ define a logistic regressor. The limitations of linearity restrict our focus to binary classificiation for this class. By contrast, for the non-convex case, we model the classifier as a two layered convolutional neural network (CNN) with the ReLu nonlinearity and the MaxPooling followed by a fully connected layer, and consider the full multi-class problem. The MNIST data is such that the dimension of features is $p=784$ and total sample number is $N=26491$.

\subsection{Linear Model}

As previously mentioned, we restrict focus to classifying digits 0 and 8 in the linear model. The expected objective function $F(\bbx)$ with $\bbxi = (\bbz, y)$ in \eqref{eq:main_prob} is defined as the $\lambda$-regularized negative log-likelihood
\begin{equation}
\begin{split}
F(\bbx) = \frac{\lambda}{2} \| \bbx \|^2 + \frac{1}{N} \sum_{n=1}^N \log \left(1+\exp (-y_n \bbx^\top \bbz_n)\right)
\end{split}
\end{equation}
where $(\lambda/2) \| \bbx \|^2$ is the regularization term. Note that it is actually an ERM problem, which is an instantiation of  \eqref{eq:main_prob}. We run the TSA scheme on this objective with the understanding that it applies more broadly to the population problem. In this case, the variance of stochastic approximation is relatively small such that we do not need a large batch-size to approximate the true gradient. Hence, the post TSA with additive rule \eqref{eq:add} is chosen with $n_0=1$ and $\beta = 5$. 
 
We first run three exact convergent algorithms: TSA, SGD with diminishing step-size $\alpha_k = \min(1/L, 100/(Lk))$ and SGD with exponential batch-size $n_k = 2^k$. Fig. 1 plots the objective value and the batch-size as iteration $k$ increases. SGD with exponential batch-size achieves the best performance (Fig. 1b), but its batch-size explodes quickly to $26491$ indicating huge sample complexity (Fig. 1c). SGD with diminishing step-size requires one sample per iteration such that has the least sample complexity, however, it converges too slowly. Considering TSA, on the one hand, it performs comparably to SGD with exponential batch-size, and its batch-size only grows from 1 to 206, improving the computational cost substantially. Relative to SGD with diminishing step-size, it achieves improved convergence but only requires a small number of additional samples. Overall, though three algorithms all converge exactly, TSA attains comparable convergence accuracy to SGD with exponential batch-size and comparable sample complexity to SGD with diminishing step-size.
 
Fig. \ref{fig:convex} shows performances of TSA and two approximate convergent algorithms. Specifically, TSA and SGD with constant batch-size $n\!=\!200$ exhibit comparable performances, among which the latter is just slightly better. SGD with $n=20$ varies in a large error neighborhood, and performs worse than another two. Fig. \ref{subfigc_convex} depicts the corresponding batch-size versus iteration. Observe that TSA saves more than half of samples compared with SGD with $n=200$, but achieves similar performance. In terms of SGD with $n=20$, though it requires least samples, it performs too badly to consider. It should also be noted that SGDs with $n=200$ and $n=20$ cannot reach the optimal solution but will be trapped in an error neighborhood eventually.

\begin{figure*}%
\centering
\begin{subfigure}{0.5\columnwidth}
\includegraphics[width=0.9\linewidth, height = 0.65\linewidth]{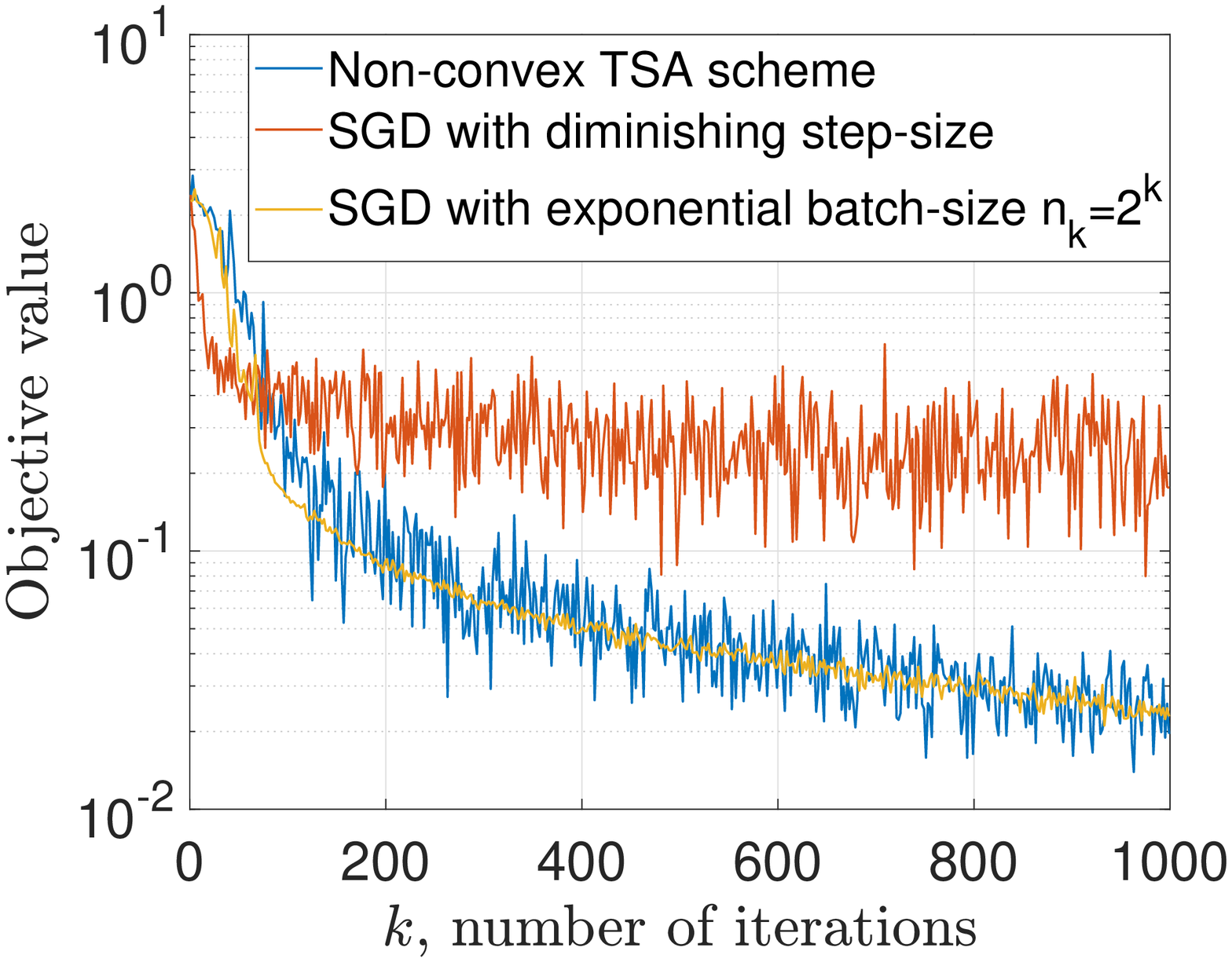}%
\caption{}%
\label{subfiga_nonconvex}%
\end{subfigure}\hfill\hfill%
\begin{subfigure}{0.5\columnwidth}
\includegraphics[width=0.9\linewidth,height = 0.65\linewidth]{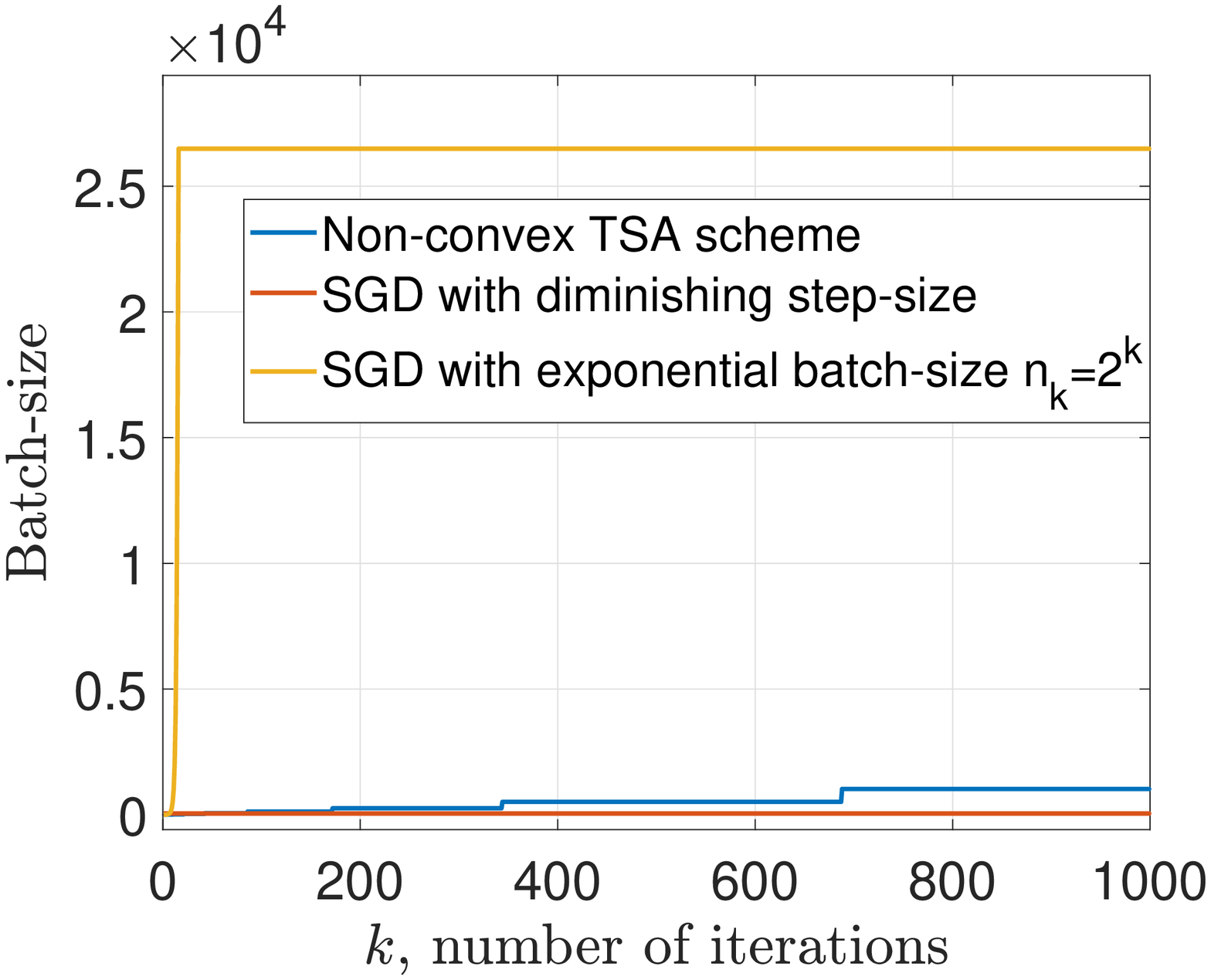}%
\caption{}%
\label{subfigb_nonconvex}%
\end{subfigure}%
\caption{Convolutional neural network on MNIST classification problem with three exact convergent algorithms: non-convex TSA, SGD with diminishing step-size $\alpha_k = \min(1/L, 100/(Lk))$ and SGD with exponential batch-size $n_k=2^k$: (a) Objective value as a function of training iterations; (b) Batch-size required per iteration as a function of training iterations. }\label{fig:nonconvex}
\end{figure*}
  
%
%

To further substantiate these trends, we compare the number of samples required to reduce the loss to a target sub-optimality for four algorithms: TSA, SGDs with $n=20$ and $n=200$, and SGD with exponential batch-size $n_k = 2^k$. Let the target loss be $0.063$. Table \ref{table2} summarizes the required number of samples and the relative sample efficiency (based on SGD with $n_k = 2^k$) for four algorithms. We can see the sample complexity of TSA is far less than SGD with $n=200$ and $n_k = 2^k$, but performs almost as well as them (Fig. 1b and Fig. 2b). SGD with $n=20$ never obtains such a loss due to its large variance error from stochastic approximation. Thus, its number of samples is infinity. Overall, TSA exhibits the best relative sample efficiency among four algorithms. 


\subsection{Convolutional Neural Network}

To analyze how TSA works for non-convex problems, we consider a two-layered convolutional neural network $\Phi(\bbz)$ for multi-class classification of all numbers from $\{0,\dots,9\}$ in MNIST dataset \cite{ciresan2011flexible}. In particular, the first layer contains $25$ filters and the second layer contains $50$ filters, where each filter is with kernel size $3$. The ReLu and the MaxPooling are utilized as the activation and pooling functions, respectively. A fully connected layer follows in the end to match the output dimension. The expected objective function $F(\bbx)$ with $\bbxi = (\bbz, y)$ in \eqref{eq:main_prob} is the cross entropy loss 
\begin{equation}
\begin{split}
F(\bbx) = - \frac{1}{N} \sum_{n=1}^N z_n \log \left(\text{softmax}(\Phi(\bbx, \bby_n))\right)
\end{split}
\end{equation}
where $\text{softmax}(\cdot)$ is the softmax function. As the classification problem becomes more complicated, i.e., classifying $10$ numbers rather than $2$ numbers, and the CNN architecture are more complex, the variance of stochastic approximation in non-convex case is large such that we require large batch-size for the target suboptimality. Thus, we select the prior TSA with multiplicative rule \eqref{eq:multi} with $n_0=1$ and $m = 2$.

\begin{figure*}%
\centering
\begin{subfigure}{0.5\columnwidth}
\includegraphics[width=0.9\linewidth, height = 0.65\linewidth]{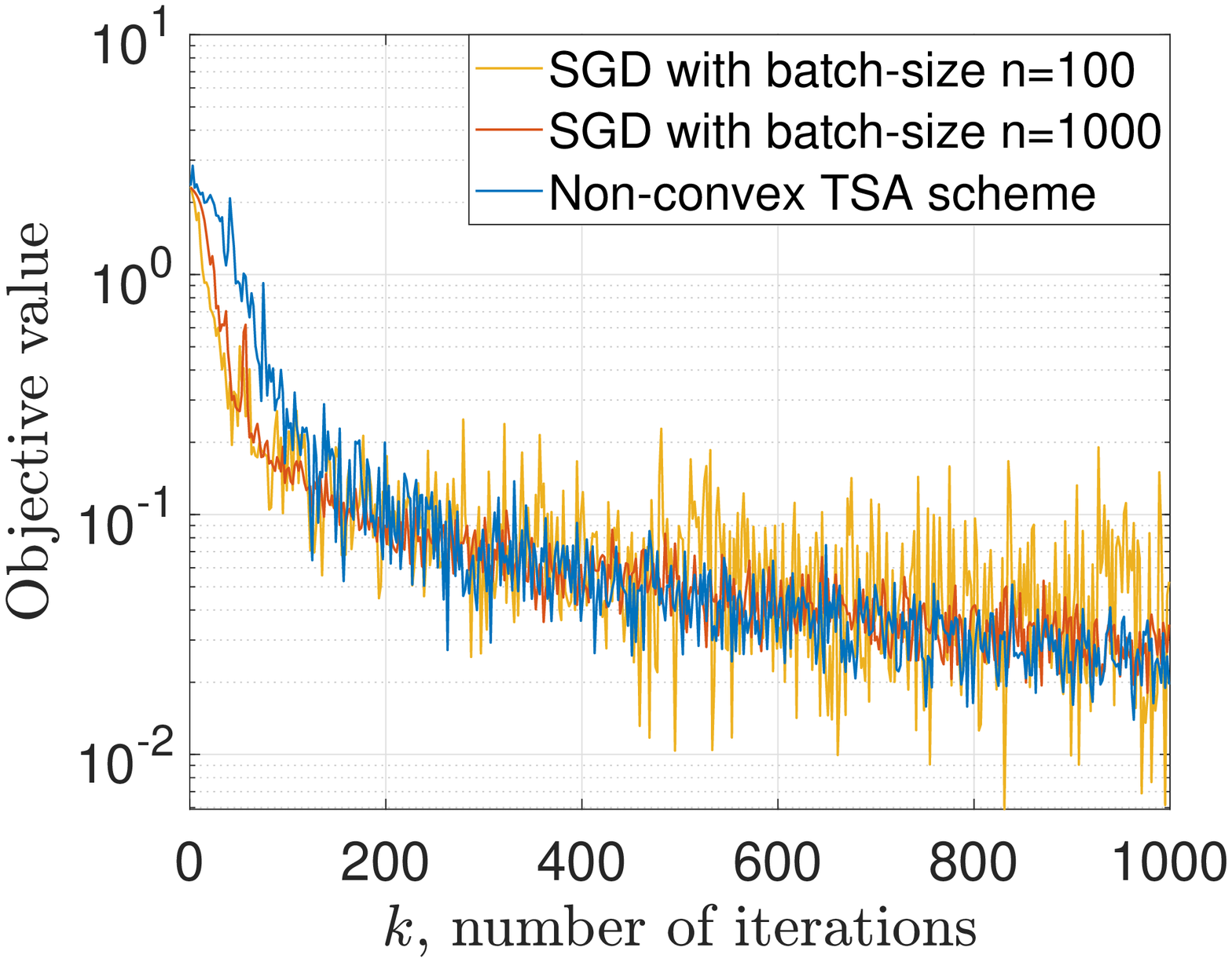}%
\caption{}%
\label{subfiga_nonconvex2}%
\end{subfigure}\hfill\hfill%
\begin{subfigure}{0.5\columnwidth}
\includegraphics[width=0.9\linewidth,height = 0.65\linewidth]{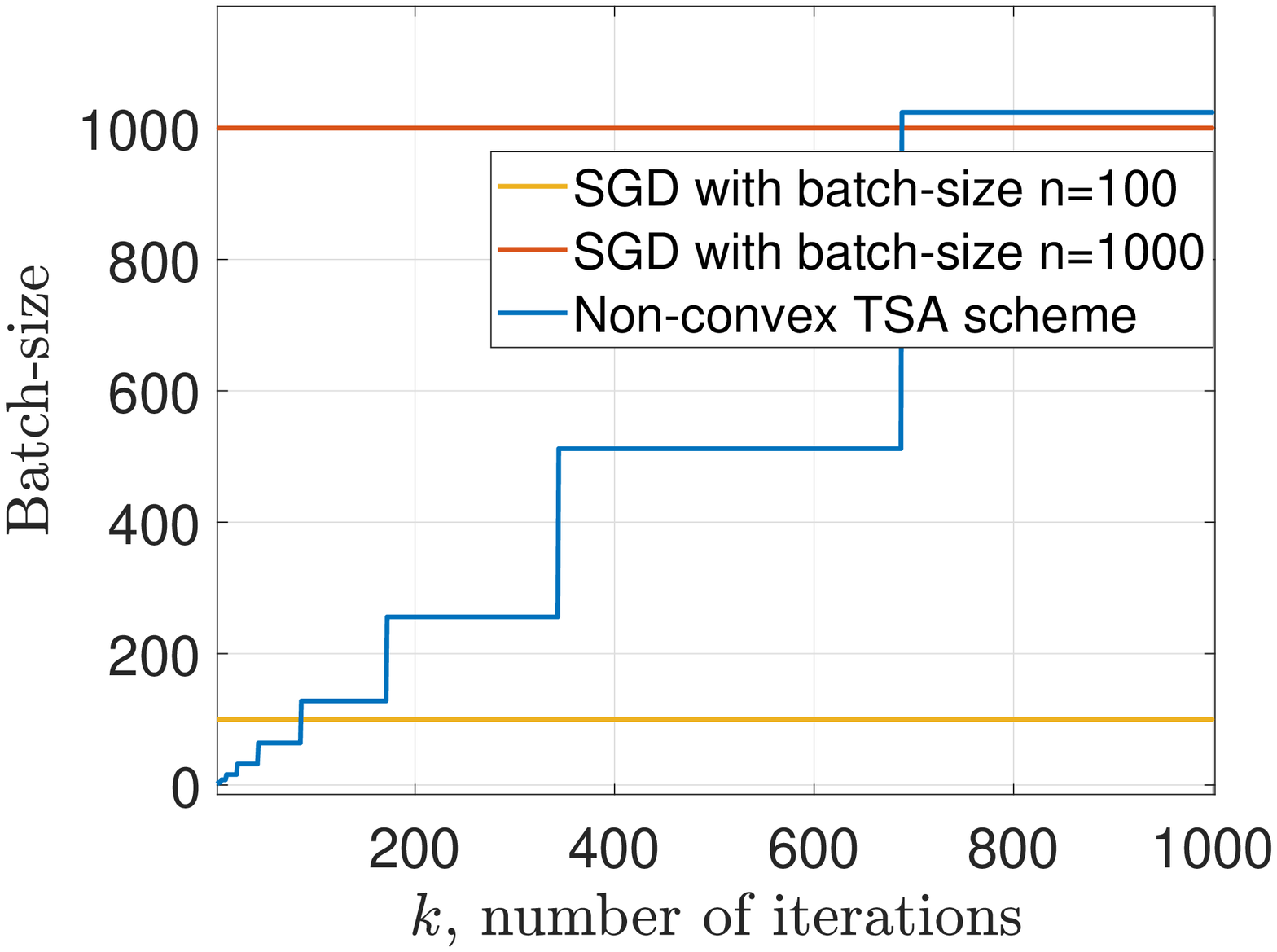}%
\caption{}%
\label{subfigb_nonconvex2}%
\end{subfigure}%
\caption{Convolutional neural network on MNIST classification problem with non-convex TSA, and two approximate convergent algorithms: SGD with constant batch-size $n=100$ and SGD with constant batch-size $n=1000$: (a) Objective value as a function of training iterations; (b) Batch-size required per iteration as a function of training iterations.}\label{fig:nonconvex2}\vspace{-2mm}
\end{figure*}

We first compare three exact convergent algorithms in Fig. 3. We see that TSA and SGD with exponential batch-size $n_k=2^k$ show comparable convergence rates. Though SGD with exponential batch-size performs better with smaller error neighborhood, it wastes too much samples compared with TSA as observed in Fig. 3b. SGD with diminishing step-size $\alpha_k\!=\! \min(1/L, 100/(Lk))$ only requires $50$ samples per iteration with the least sample complexity, while converges slowly due to the reduction of step-size. TSA reaches a good balance between the rate and the sample complexity, i.e., it decreases the objective at a comparable rate to SGD with exponential batch-size and increases the batch-size only when necessary that maintains the comparable sample complexity to SGD with diminishing step-size.

In Fig. 4, we depict performances of TSA and two approximate convergent algorithms. On the one hand, Fig. \ref{subfiga_nonconvex2} shows that TSA performs comparably to SGD with constant batch-size $n=1000$ with similar rates and error neighborhoods, while SGD with $n=100$ sinks into a large error neighborhood and has stopped getting progresses since early iterations. On the other hand, TSA saves almost a half sample complexity compared with SGD with $n=1000$ as observed in Fig. \ref{subfigb_nonconvex2}. SGD with $n=100$ has the least sample complexity but with too limited performance to consider. Furthermore, SGD with either $n=100$ or $n=1000$ has a limiting error neighborhood that prevents its convergence to the exact optimal solution.

Table \ref{table4} then summarizes the number of samples required for the target loss $0.06$ and corresponding relative sample efficiencies for four algorithms. Similarly as the convex case, TSA shows the best relative sample efficiency (requires least samples), but achieves a better performance than SGDs with $n=100$ and $n=1000$ and a comparable performance to SGD with exponential batch-size.

In conclusion, with numerical results for both convex and non-convex problems, the proposed TSA scheme exhibits a fast convergence rate with reduced sample complexity, which reaches a favorable balance among standard SGD algorithms. More importantly, it gives a guideline how to tune SGD parameters appropriately with no need to preset parameters at the outset. The latter may be sensitive and difficult in practice.  

\begin{table} 
\begin{center}  
\caption{Samples required for loss $0.06$ for SGD with $n=20$, non-convex TSA, SGD with $n = 200$, and SGD with $n_k=2^k$. Relative sample efficiency computed as the ratio of samples required with respect to SGD with $n_k=2^k$.}  
\label{table4}
\begin{tabular}{|l|l|l|l| p{2cm}|}  
\hline  
Target loss: 0.06 & Required samples & Relative sample efficiency \\ \hline  
SGD with $n=100$ & $\infty$ & 0 \\ \hline  
TSA &  271687 & 31 \\  \hline
SGD with $n=1000$ & 516000 & 16 \\ \hline
SGD with $n_k = 2^k$ & 8360767 & 1 \\
\hline  
\end{tabular}  
\end{center}  \vspace{-4mm}
\end{table}


\section{Conclusions} \label{sec:conclusion}

This paper investigates stochastic optimization problems that are of critical importance in wide science and engineering areas. The two scale adaptive (TSA) scheme is developed for both convex and non-convex problems, by co-considering the batch-size and the step-size of SGD simultaneously. In particular, the optimal step-size is selected to acquire theoretically largest learning rate, while the batch-size is increased adaptively to tighten the limiting error neighborhood. Equipped with the exact convergence, TSA exhibits the fast rate due to the selected optimal step-size. In the meantime, it only increases the batch-size when necessary, which reduces the sample complexity as much as possible. Numerical experiments are performed to show significant performance of TSA, which well balances rates and variance among standard SGD algorithms.


\appendices 




\section{Proof of Proposition 1} \label{pr:prop1}

\begin{proof}

From the truncated Taylor's expansion for $\mathbb{E}[F(\bbx_{k+1})]$ at $\bbx_{k}$ and the Lipschitz continuity in Assumption \ref{asm1}, we have the inequality
\begin{equation} \label{prprop11}
\begin{split}
F(\bbx_{k+1})&\le F(\bbx_k) \!+\! \nabla F(\bbx_k)^\top\! \big(\bbx_{k+1}-\bbx_k\big)\!+\!\!\frac{L}{2} \| \bbx_{k+1}\!-\!\bbx_k \|^2_2.
\end{split}
\end{equation}
By substituting the update rule \eqref{eq:sgd} of SGD into \eqref{prprop11}, we get
\begin{equation} \label{prprop12}
\begin{split}
F(\bbx_{k+1})&\!\le\! F(\!\bbx_k\!) \!-\! \alpha_k \nabla F(\bbx_k)^\top \nabla\! f_{S_k}(\bbx_k)\!+\!\frac{\alpha_k^2 L}{2}\| \nabla f_{S_k}(\bbx_k) \|^2_2.
\end{split}
\end{equation}
Take the expectation for both sides of \eqref{prprop12}, and by using the fact $\mathbb{E}\left[ \nabla f_{S_k}(\bbx_k)\right] = \mathbb{E}\left[\nabla F(\bbx_k)\right]$, we have
\begin{align}\label{prprop13}
&\mathbb{E}\!\left[ F(\bbx_{k+1})\right] \le \mathbb{E}\left[ F(\bbx_{k})\right] -\alpha_k \mathbb{E}\left[\| \nabla F(\bbx_k)\|^2_2 \right]+\frac{\alpha_k^2 L}{2}\mathbb{E}\left[\| \nabla\! f_{S_k}(\bbx_k)\|^2_2 \right]
\end{align}
where the linearity of expectation is used. By subtracting $F(\bbx^*)$ in both sides of \eqref{prprop13}, we get
\begin{align} \label{prprop135}
&\mathbb{E}\left[ F(\bbx_{k+1})\!-\!F(\bbx^*)\right] \le \mathbb{E}\left[ F(\bbx_{k})\!-\!F(\bbx^*)\right]-\alpha_k \mathbb{E}\!\left[\| \nabla\! F(\bbx_k)\|^2_2 \right]\!+\frac{\alpha_k^2 L}{2}\mathbb{E}\left[\| \nabla f_{S_k}(\bbx_k)\|^2_2 \right].
\end{align}

Consider the third term $\mathbb{E}\left[\parallel \nabla f_{S_k}(\bbx_k)\parallel^2\right]$ in the bound of \eqref{prprop135}. Note that for any random variable $x$, its variance is
\begin{equation}\label{prprop15}
\begin{split}
{\rm Var}[x]  = \mathbb{E}[x^2] - \mathbb{E}[x] \mathbb{E}[x].
\end{split}
\end{equation}
We then have
\begin{align} \label{prprop16}
\mathbb{E}\left[\| \nabla f_{S_k}(\bbx_k)\|^2_2\right] &= \mathbb{E}\!\left[\| \nabla\! F(\bbx_k) \|^2_2\right] \!+\!\mathbb{E}\!\left[\| \nabla\! f_{S_k}\!(\bbx_k)\!-\!\nabla\! F(\bbx_k)\|^2_2\right] = \mathbb{E}\!\left[\| \nabla\! F(\bbx_k) \|^2_2\right] \!+\! \| {\rm Var}\! \left[ \nabla\! f_{S_k}\!(\bbx_k)\right] \|_1.
\end{align}
From \cite{Freund1962}, we proceed to estimate the variance vector ${\rm Var} \left[\nabla f_{S_k}(\bbx_k) \right]$ by
\begin{equation} \label{prprop17}
\begin{split}
{\rm Var}[\nabla f_{S_k}(\bbx_k)] = \frac{{\rm Var} \left[\nabla f_i (\bbx_k) \right]}{\vert S_k \vert} \cdot \frac{N-\vert S_k \vert}{N-1}
\end{split}
\end{equation}
where ${\rm Var} \left[\nabla f_i (\bbx_k) \right]$ is the population variance vector and $N$ is the number of samples drawn to approximate the distribution $\bbp$, which should be infinity in the stochastic optimization problem \eqref{eq:main_prob}. By substituting \eqref{prprop17} into $\| {\rm Var} \left[ \nabla f_{S_k}(\bbx_k)\right] \|_1$ with $\vert S_k \vert = n_k$, we have
\begin{equation} \label{prprop18}
\begin{split}
\| {\rm Var} \left[ \nabla f_{S_k}(\bbx_k)\right] \|_1 \le \| \frac{{\rm Var}[\nabla f_i(\bbx_k)]}{n_k} \|_1 \le \frac{w}{n_k}
\end{split}
\end{equation}
where the last inequality is due to Assumption \ref{as2}. By substituting \eqref{prprop18} into \eqref{prprop16} and then into \eqref{prprop135}, we get
\begin{align} \label{prprop19}
&\mathbb{E}\left[ F(\bbx_{k+1})-F(\bbx^*)\right] \le \mathbb{E}\!\left[ F(\bbx_{k})\!-\!F(\bbx^*)\right] \!-\! \left(\! \alpha_k \!-\! \frac{L\alpha_k^2}{2}\! \right)\! \mathbb{E}\!\left[\| \nabla\! F(\bbx_k) \|^2_2\right] \!+\! \frac{\alpha_k^2 L w}{2n_k}.
\end{align}

Now consider the term $\| \nabla F(\bbx_k)\|^2_2$. From Assumption \ref{as3}, it holds that for any $ \bbx,\bby \in \mathbb{R}^p$,
\begin{equation} \label{prprop14}
\begin{split}
F(\bbx)\ge F(\bby)+\nabla F(\bby)^\top (\bbx-\bby) + \frac{\ell}{2}\| \bbx-\bby \|^2_2.
\end{split}
\end{equation}
Since the right side of \eqref{prprop14} is a quadratic function with minimal value at $\widehat{\bbx}=\bby-\frac{1}{\ell}\nabla F(\bby)$, we have
\begin{equation}
\begin{split}
F(\bbx)&\ge F(\bby)+\nabla\! F(\bby)^\top (\widehat{\bbx}-\bby) + \frac{\ell}{2}\| \widehat{\bbx}-\bby\|^2_2 = F(\bby) - \frac{1}{2\ell}\| \nabla F(\bby) \|^2_2.
\end{split}
\end{equation}
Then let $\bbx=\bbx^*$ and $\bby=\bbx_k$, and we get
\begin{equation} \label{prprop145}
\begin{split}
\| \nabla F(\bbx_k) \|^2_2 \ge 2\ell\left(F(\bbx_k)-F(\bbx^*)\right).
\end{split}
\end{equation}

By substituting \eqref{prprop145} into \eqref{prprop19}, we obtain
\begin{equation} \label{prprop110}
\begin{split}
&\mathbb{E}\left[ F(\bbx_{k+1})-F(\bbx^*)\right] \le \left( 1-2\alpha_k \ell + L\ell\alpha_k^2 \right) \mathbb{E}\left[ F(\bbx_{k})-F(\bbx^*)\right] + \frac{\alpha_k^2 L w}{2n_k}.
\end{split}
\end{equation}
Observe that \eqref{prprop110} is a recursion process such that we can continue deriving the bound until it is represented by the initial condition $F(\bbx_{0})-F(\bbx^*)$ as
\begin{align}
\mathbb{E}\!\left[ F(\bbx_{k+1})\!-\!F(\bbx^*)\right] &\!\le\! \big( \prod_{i
=0}^k\! 1\!\!-\!2\alpha_i \ell \!+\! L\ell\alpha_i^2 \big) \big( F(\bbx_{0})\!-\!F(\bbx^*)\big) + \sum_{i=0}^k\! \big( \frac{\alpha_i^2 L w}{2n_i}\! \prod_{j=i+1}^k\! 1\!-\!2\alpha_j \ell \!+\! L\ell\alpha_j^2 \big).
\end{align}
Let $r(\alpha) = 1-2\alpha \ell + L\ell\alpha^2$ and we complete the proof.

\end{proof}


\section{Proof of Corollary 1} \label{pr:coro11}

\begin{proof}
By substituting $\alpha_k = \alpha$ and $n_k = n$ into Proposition \ref{prop1}, we have
\begin{equation}\label{pr:coroeq1}
\begin{split}
 &\mathbb{E}\left[ F(\bbx_{k+1})-F(\bbx^*)\right] \le r(\alpha)^{k+1} \left( F(\bbx_{0})\!-\!F(\bbx^*) \right) \!+\! \sum_{i=0}^k \left( \frac{\alpha^2 L w}{2n} r(\alpha)^{k-i} \right).
\end{split}
\end{equation}
Note that the second term in \eqref{pr:coroeq1} is a geometric series with the common ratio $r(\alpha)$. By summing terms up, we have
 \begin{equation}\label{pr:coroeq2}
\begin{split}
\sum_{i=0}^k \left( \frac{\alpha^2 L w}{2n} r(\alpha)^{k-i} \right) &= \frac{\alpha^2 L w}{2n} \cdot \frac{1-r(\alpha)^{k+1}}{1-r(\alpha)} \le \frac{\alpha^2 L w}{2n (1-r(\alpha))},
\end{split}
\end{equation}
where $r(\alpha) < 1$ is used in the last inequality. By substituting $r(\alpha) = 1-2\alpha \ell + L\ell\alpha^2$ into \eqref{pr:coroeq2}, we complete the proof.
\end{proof}


\section{Proof of Proposition 2} \label{pr:prop2}

\begin{proof}
From the truncated Taylor's expansion for $\mathbb{E}[F(\bbx_{k+1})]$ at $\bbx_{k}$ and the Lipschitz continuity in Assumption \ref{asm1}, we have
\begin{equation} \label{eq:prprop22}
\begin{split}
\mathbb{E}\left[ F(\bbx_{k+1}) \right] & \le \mathbb{E}\big[ F(\bbx_{k})+ \nabla F(\bbx_{k})^\top (\bbx_{k+1}-\bbx_k) + \frac{L}{2} \| \bbx_{k+1}-\bbx_k \|_2^2 \big].
\end{split}
\end{equation}
Substituting the update rule \eqref{eq:sgd} of SGD into \eqref{eq:prprop22}, we get
\begin{equation}\label{eq:prprop225}
\begin{split}
\mathbb{E}\left[ F(\bbx_{k+1})\right] &\le \mathbb{E}\big[ F(\bbx_{k}) - \alpha_k \nabla F(\bbx_{k})^\top \nabla f_{S_k}(\bbx_k) + \frac{\alpha_k^2 L}{2} \| \nabla f_{S_k}(\bbx_k) \|_2^2\big].
\end{split}
\end{equation}
With the linearity of expectation and the fact that $\mathbb{E}[\nabla f_{S_k}(\bbx_k)]=\mathbb{E}[\nabla F(\bbx_k)]$, \eqref{eq:prprop225} becomes
\begin{align}\label{eq:prprop23}
&\mathbb{E}\left[ F(\bbx_{k+1})\right] \le \mathbb{E}\left[ F(\bbx_{k}) \right] - \alpha_k \mathbb{E}\left[ \| \nabla F(\bbx_{k})\|_2^2 \right] +\frac{\alpha_k^2 L}{2} \mathbb{E}\left[ \| \nabla f_{S_k}(\bbx_k) \|_2^2\right].
\end{align}
Consider the term $\mathbb{E}\left[\| \nabla f_{S_k}(\bbx_k)\|_2^2\right]$ in the bound of \eqref{eq:prprop23}. By substituting \eqref{prprop18} in the proof of Proposition \ref{prop1} into \eqref{eq:prprop23}, we get
\begin{align}\label{eq:prprop27}
&\mathbb{E}\left[ F(\bbx_{k+1})\right] \le \mathbb{E}\left[ F(\bbx_{k})\right]-\big(\alpha_k-\frac{\alpha_k^2 L}{2}\big) \mathbb{E}\left[ \| \nabla F(\bbx_k)\|^2 \right] + \frac{\alpha_k^2 L w}{2 n_k}.
\end{align}

Here, we use $\parallel \nabla_\bbx F(\bbx_{t}) \parallel^2_2 \le \epsilon$ as the convergence criterion to judge the approximate stationary in non-convex optimization problems, where $\epsilon$ can be any small value. We then focus on the term $\mathbb{E}\left[ \| \nabla F(\bbx_{t})\|_2^2 \right]$ in the bound of \eqref{eq:prprop27} and move it to the left side as
\begin{align} \label{eq:prprop28}
&\mathbb{E}\left[ \| \nabla F(\bbx_{k})\|_2^2 \right] \le \frac{1}{\alpha_k-\frac{\alpha_k^2 L}{2}} \mathbb{E}[ F(\bbx_{k}) - F(\bbx_{k+1}) ]+ \frac{\alpha_k^2 L w}{2n_k \big( \alpha_k-\frac{\alpha_k^2 L}{2}\big)}.
\end{align}
The bound in \eqref{eq:prprop28} cannot be used to show the convergence due to the existence of term $\mathbb{E}[ F(\bbx_{k}) - F(\bbx_{k+1}) ]$. To handle this issue, note that \eqref{eq:prprop28} holds for all iterations $k=0,1,\ldots$ and we have $\alpha_k = \alpha$ as a constant, such that we have
\begin{align}
&\!\!\sum_{i=0}^{k} \mathbb{E}\!\left[ \| \nabla F(\bbx_{i})\|_2^2 \right] \!\le\! \frac{1}{\alpha\!-\frac{\alpha^2 L}{2}} \mathbb{E}[ F(\bbx_{0}) \!-\! F(\bbx_{k+1}) ]\!+\! \sum_{i=0}^{k}\! \frac{ \alpha^2 L w}{2n_i \big(\alpha-\frac{\alpha^2 L}{2}\big)}.
\end{align}
Thus, we have
\begin{align} \label{eq:prprop29}
&\min_{0 \le i \le {k}} \mathbb{E}\left[ \| \nabla F(\bbx_{i})\|_2^2 \right] \le \frac{1}{k+1} \sum_{i=0}^{k} \mathbb{E}\left[ \| \nabla F(\bbx_{i})\|_2^2 \right] \le \frac{1}{\alpha k-\frac{\alpha^2 L k}{2}} \left( F(\bbx_{0}) - F(\bbx^*) \right)+ \sum_{i=0}^{k} \frac{ \alpha^2 L w}{2k n_i \left(\alpha-\frac{\alpha^2 L}{2}\right)}
\end{align}
where the last inequality is due to the fact that $F(\bbx^*)=\min_{\bbx} F(\bbx) \le F(\bbx_{k+1})$ with $\bbx^*$ an optimal solution of \eqref{eq:main_prob}. We then complete the proof.
\end{proof}



\section{Proof of Theorem 1} \label{pr:thm1}

\begin{proof}
\textbf{The post TSA scheme.} Consider iteration $k$ at $t$-th inner time-scale. From Proposition \ref{prop1}, we have
\begin{equation} \label{prthm11}
\begin{split}
&\mathbb{E}\left[ F(\bbx_{k})-F(\bbx^*)\right]\le \big(1-\frac{\ell}{L} \big)^{k-K}\mathbb{E}\left[ F(\bbx_{K})-F(\bbx^*)\right] + \frac{w}{2 n_t\ell}
\end{split}
\end{equation}
with $K = \sum_{i=0}^{t-1} K_i$ and $K_i$ the duration of $i$-th inner time-scale. From the stop criterion \eqref{eq:innercond05}, we have 
\begin{align} \label{prthm1151}
\mathbb{E}\left[ F(\bbx_{K})-F(\bbx^*)\right] \le 2^{t}\big(1-\frac{\ell}{L} \big)^{K} \left( F(\bbx_0)-F(\bbx^*)\right)+\frac{w}{2 n_{t-1}\ell}.
\end{align}
Now multiplying $1-\frac{\ell}{L}$ on both sides of \eqref{prthm1151} and using the stop criterion \eqref{eq:innercond05} again, we get
\begin{equation} \label{prthm115}
\begin{split}
\big(1-\frac{\ell}{L}\big)\mathbb{E}\left[ F(\bbx_{K})-F(\bbx^*)\right] \le \frac{w}{2 n_{t-1}\ell}+ \frac{w}{2 n_{t-1}\ell}.
\end{split}
\end{equation}
Now substituting \eqref{prthm115} into \eqref{prthm11}, we have
\begin{equation} \label{prthm12}
\begin{split}
&\mathbb{E}\left[ F(\bbx_{k})-F(\bbx^*)\right] \le 2\big(1-\frac{\ell}{L} \big)^{k-K-1}\frac{w}{2 n_{t-1}\ell} + \frac{w}{2 n_t\ell}.
\end{split}
\end{equation}
The bound in \eqref{prthm12} comprises two terms. For the first term, observe that at each inner time-scale the convergence rate term $Q_1^t$ keeps decreasing while the error neighborhood term $Q_2^t$ remains constant by definition, such that the duration $K_t$ is finite. Then $t \to \infty$ as $k \to \infty$. Therefore, $\lim_{k \to \infty} n_{t-1} = \lim_{t \to \infty} n_{t-1} = \infty$ and $\lim_{k \to \infty}w/({2 n_{t-1}\ell})=0$. In addition provided that $1-\ell /L \le 1$ and $k-K-1\ge 0$, we have
\begin{equation}\label{prthm13}
\begin{split}
\lim_{k \to \infty} 2\big(1-\frac{\ell}{L} \big)^{k-\sum_{i=0}^{t-1}K_{i}-1} \frac{w}{2n_{t-1}\ell } = 0.
\end{split}
\end{equation}

For the second term, $\lim_{k \to \infty} n_{t} = \infty$ since $n_t > n_{t-1}$ and thus $\lim_{k \to \infty}w/({2 n_t \ell})=0$. By substituting this result and \eqref{prthm13} into \eqref{prthm12}, we get $\lim_{k\to \infty}\mathbb{E}\left[ F(\bbx_{k})-F(\bbx^*)\right]=0$.

With the strong convexity from Assumption \ref{as3} and the fact that the gradient of optimal solution $\bbx^*$ is the null vector, we have $F(\bbx_k) - F(\bbx^*) \ge (\ell/2) \| \bbx_k - \bbx^* \|^2_2$. By using this result, we get $\lim_{k \to \infty}\! \mathbb{E}\left[ \| \bbx_k - \bbx^* \|_2 \right]\! =\! 0$.

\textbf{The prior TSA scheme.} Based on the stop criterions \eqref{eq:innercond05} and \eqref{eq:innercond1} of the post and the prior TSA schemes, the prior increases the batch-size faster than the post. The prior then has a faster rate and thus converges exactly as well.

\end{proof}


\section{Proof of Theorem 2} \label{pr:prop3}

\begin{proof}
\textbf{The post TSA scheme.} Consider iteration $k$ at $t$-th inner time-scale. From the stop criterion \eqref{eq:innercond05}, we have
\begin{align} \label{prprop315}
\mathbb{E}\left[ F(\bbx_{k})-F(\bbx^*)\right] &\le 2^{t}\big(1-\frac{\ell}{L} \big)^{k} \left( F(\bbx_0)-F(\bbx^*)\right)+\frac{w}{2 n_t\ell} \le 2^{t+1}\!\big(1-\frac{\ell}{L} \!\big)^{k} \left( F(\bbx_0)-F(\bbx^*)\right).
\end{align}
The rate of post TSA is approximately $\ccalO(2^t(1-\ell/L)^k)$. 

Assume TSA uses the multiplicative rule \eqref{eq:multi} for augmenting the batch-size. In this case, $K_t \le \left\lceil \log_{1-\frac{\ell}{L}} \frac{1}{2m} \right\rceil $ for all $t \ne 0$ according to \eqref{eq:innercond05} where $\left\lceil \cdot \right\rceil$ is the ceil function. By using this result and the fact $k=\sum_{i=0}^{t-1}K_i+k_t$, we get
\begin{equation} \label{prprop33}
\begin{split}
t \ge \frac{k-K_0}{\left\lceil \log_{1-\frac{\ell}{L}} \frac{1}{2m} \right\rceil}.
\end{split}
\end{equation}
We can also refer from the stop criterion \eqref{eq:innercond05} that
\begin{align}\label{prprop341}
2^{t}\big(1-\frac{\ell}{L} \!\big)^{k} \left( F(\bbx_0)-F(\bbx^*)\right) \le \frac{w}{2 n_{t-1}\ell},~\frac{w}{2 n_t\ell} \le \frac{w}{2 n_{t-1}\ell}.
\end{align}
By substituting \eqref{prprop33} and \eqref{prprop341} into the first inequality of \eqref{prprop315} and using the fact $n_{t-1}=n_0 m^{t-1}$, we get
\begin{align} \label{prprop35}
\mathbb{E}\left[ F(\bbx_k) \!-\! F(\bbx^*) \right] &\le \frac{w}{n_0 m^{t-1} \ell} \le \frac{w}{n_0 \ell} \big(\frac{1}{m}\big)^{\frac{k}{\left\lceil \log_{1-\frac{\ell}{L}} \frac{1}{2m} \right\rceil}- \frac{K_0}{\left\lceil \log_{1-\frac{\ell}{L}} \frac{1}{2m} \right\rceil}-1}.
\end{align}
Therefore, the rate is approximately $\ccalO((1/m)^{k/\log_{1-\ell/L}\frac{1}{2m}})$.

\textbf{The prior TSA scheme.} Consider iteration $k$ at $t$-th inner time-scale. By substituting \eqref{eq:error_conditions1} into \eqref{eq:innercond1}, the stop criterion of prior TSA at $t$-th inner time-scale is 
\begin{equation} \label{prprop36}
\begin{split}
K_t =  \max_{k_t} \left\{ \big( 1 - \frac{\ell}{L} \big)^{K + k_{t}}\! \left(F(\bbx_{0})-F(\bbx^*)\right) \ge \frac{w}{2 n_t\ell} \right\}
\end{split}
\end{equation}
with $K=\sum_{i=0}^{t-1}K_i$. From \eqref{eq:decrement} in Proposition \ref{prop1}, we have
\begin{align} \label{prprop37}
\mathbb{E}\left[ F(\bbx_k) - F(\bbx^*) \right] &\le \big( 1 - \frac{\ell}{L} \big)^{k} \left(F(\bbx_{0})-F(\bbx^*)\right) + \sum_{i=0}^{t-1} \big( 1 - \frac{\ell}{L} \big)^{\sum_{j=i+1}^{t}K_j+k_t} \frac{w}{2 n_i\ell}+\frac{w}{2 n_{t}\ell}\nonumber\\
& \le (t+2) \big( 1 - \frac{\ell}{L} \big)^{k} \left(F(\bbx_{0})-F(\bbx^*)\right)
\end{align}
where the last inequality is due to the stop criterion \eqref{prprop36} that applies for each inner time-scale. As such, the convergence rate of prior TSA is approximately $\ccalO(t(1-\ell/L)^k)$. 

Assume TSA uses the multiplicative rule \eqref{eq:multi} for augmenting the batch-size. Similarly as \eqref{prprop33}, we have
\begin{equation} \label{prprop38}
\begin{split}
t \le \frac{k-K_0}{\left\lfloor \log_{1-\frac{\ell}{L}} \frac{1}{m} \right\rfloor}+1
\end{split}
\end{equation}
with $\left\lfloor \cdot \right\rfloor$ the floor function. Substituting \eqref{prprop38} in \eqref{prprop37}, we get
\begin{align} \label{prprop39}
&\mathbb{E}\left[ F(\bbx_k) - F(\bbx^*) \right] \le \big(\frac{k}{\left\lfloor \log_{1-\frac{\ell}{L}} \frac{1}{m} \!\right\rfloor} - \frac{K_0}{\left\lfloor\! \log_{1-\frac{\ell}{L}} \frac{1}{m} \right\rfloor}+3\big) \big( 1 - \frac{\ell}{L} \big)^{k} \left(F(\bbx_{0})-F(\bbx^*)\right).
\end{align}
The rate is approximately $\ccalO((1-\ell/L)^kk/\log_{1-\ell/L}\frac{1}{m})$.
\end{proof}


\section{Proof of Theorem 3} \label{pr:thm2}

\begin{proof}
An $\epsilon$-suboptimal solution is a solution $\bbx_{k}$ that satisfies $F(\bbx_{k}) - F(\bbx^*) \le \epsilon$. From Proposition \ref{prop1}, for SGD with constant batch-size $n$ and step-size $\alpha = 1/L$ to guarantee an $\epsilon$-suboptimal solution, we require
\begin{equation} \label{prthm21}
\begin{split}
F(\bbx_{k}) - F(\bbx^*) \le \big( 1-\frac{\ell}{L} \big)^{k} D + \frac{w}{2 n\ell} \le \epsilon. 
\end{split}
\end{equation}
Since $w/({2 n\ell})$ is constant and $\left( 1-\ell/L \right)^{k} D$ keeps decreasing, it is reasonable to stop the iteration when $\left( 1-\ell/L \right)^{k} D \le w/({2 n\ell})$. Based on this consideration, \eqref{prthm21} is equivalent to
\begin{equation}\label{prthm22}
\big( 1-\frac{\ell}{L} \big)^{k} D \le \frac{\epsilon}{2},~~\frac{w}{2 n \ell} \le \frac{\epsilon}{2}.
\end{equation}
From \eqref{prthm22}, we obtain $k \ge \left\lceil \log_{1-\frac{\ell}{L}}\frac{\epsilon}{2D} \right\rceil$ and $n \ge \left\lceil w/({\ell\epsilon})\right\rceil$. Thus, we stop the iteration at $k = \left\lceil \log_{1-\frac{\ell}{L}}\frac{\epsilon}{2D} \right\rceil$ and the total number of training samples required for SGD is
\begin{equation}\label{prthm23}
N_{SGD} = \left\lceil \log_{1-\frac{\ell}{L}}\frac{\epsilon}{2D} \right\rceil n . 
\end{equation}

Consider TSA with initial batch-size $n_0=1$. Assume it achieves the $\epsilon$-suboptimal solution at $t$-th inner time-scale. 

\textbf{The post TSA scheme.} From \eqref{prprop315}, to achieve $\epsilon$-suboptimality for the post TSA, we require
\begin{equation} \label{prthm24}
\begin{split}
&\mathbb{E}\left[ F(\bbx_{k})-F(\bbx^*)\right] \le 2^t \big( 1-\frac{\ell}{L} \big)^{k} D + \frac{w}{2 n_t \ell } \le \epsilon.
\end{split}
\end{equation}
With the same consideration as in \eqref{prthm22}, \eqref{prthm24} is equivalent to
\begin{align} \label{prthm25}
2^t \big( 1-\frac{\ell}{L} \big)^{k} D \le \frac{\epsilon}{2},~\frac{w}{2 n_t \ell} \le \frac{\epsilon}{2}.
\end{align}
Assume the second inequality in \eqref{prthm25} is satisfied at $t$-th inner time-scale. Then the first inequality will be satisfied at the end of $t$-th inner time-scale based on \eqref{eq:innercond05}. In particular, TSA first goes through $t$ inner/outer time-scales, each of which contains $K_i$ iterations with each iteration requiring $n_0m^i $ samples for $i=0,...,t-1$. It then runs $K_t+1$ iterations at $t$-th inner time-scale and each iteration uses $n_t$ samples. Here, note that TSA does not need to step into $t$-th outer time-scale to further increase batch-size since one more iteration at $t$-th inner time-scale is enough to obtain the target accuracy according to \eqref{eq:innercond05}. In addition with the multiplicative rule \eqref{eq:multi} and the stop criterion \eqref{eq:innercond05}, the duration $K_t$ is bounded by
\begin{equation} \label{prthm235}
K_t \le \left\lceil \log_{1-\frac{\ell}{L}} \frac{1}{2m} \right\rceil = \widehat{K}, ~\forall~ t=1,2,\ldots. .
\end{equation}
The total number of training samples required for TSA is
\begin{align}\label{prthm265}
N_{TSA} &= \sum_{i=0}^{t-1} K_i n_i + (K_t+1)n_t \le (K_0-\widehat{K}) n_0 + \widehat{K} \sum_{i=0}^{t} m^i n_0+ m^t n_0 \nonumber\\
& \le (K_0-\widehat{K}) n_0 + \widehat{K} \frac{m^{t+1} n_0 -n_0}{m-1}+ m^t n_0
\end{align}
where $n_t = n_0 m^t$ is used. By substituting this result and the fact $n_0=1$ into \eqref{prthm265}, we have 
\begin{equation} \label{prthm27}
\begin{split}
N_{TSA} &\le  K_0-\widehat{K} + \big(\frac{m}{m-1}\widehat{K}+1\big) n_t
\end{split}
\end{equation}
with $n_t \ge \left\lceil w/({\ell\epsilon})\right\rceil$ from \eqref{prthm25}. 

Note that both $n$ of SGD and $n_t$ of TSA need to be larger than or equal to $\left\lceil w/({\ell\epsilon})\right\rceil$. Without loss of generality, let $n=n_t = \left\lceil w/({\ell\epsilon})\right\rceil$ for a clear comparison. By substituting \eqref{prthm235} into \eqref{prthm27} and comparing the latter with \eqref{prthm23}, the ratio $\gamma$ is bounded by
\begin{equation}\label{prthm29}
\begin{split}
\gamma & \le \frac{\frac{m}{m-1} \left\lceil \log_{1-\frac{\ell}{L}} \frac{1}{2m} \right\rceil +1 }{\left\lceil \log_{1-\frac{\ell}{L}}\frac{\epsilon}{2D} \right\rceil} + \frac{K_0-\left\lceil \log_{1-\frac{\ell}{L}} \frac{1}{2m} \right\rceil}{\left\lceil \log_{1-\frac{\ell}{L}}\frac{\epsilon}{2D} \right\rceil \left\lceil \frac{w}{\ell \epsilon} \right\rceil} \le \frac{ \frac{m}{m-1} \left\lceil \log_{1-\frac{\ell}{L}} \frac{1}{2m} \right\rceil+1}{  \left\lceil \log_{1-\frac{\ell}{L}}\frac{\epsilon}{2D} \right\rceil} + \mathcal{O}(\epsilon).
\end{split}
\end{equation}

\textbf{The prior TSA scheme.} Recall \eqref{prprop37} and to achieve an $\epsilon$-suboptimal solution for the prior TSA, we require
\begin{align} \label{prthm211}
&\mathbb{E}\left[ F(\bbx_{k}) - F(\bbx^*) \right] \le (t+1) \big( 1 - \frac{\ell}{L} \big)^{k} D + \frac{w}{2 n_t \ell} \le \epsilon
\end{align}
which is equivalent to require
\begin{equation}\label{prthm212}
(t+1) \big( 1 - \frac{\ell}{L} \big)^{k} D \le \frac{\epsilon}{2},~~\frac{w}{2 n_t \ell} \le \frac{\epsilon}{2}.
\end{equation}
Assume the second inequality in \eqref{prthm212} is satisfied at $t$-th inner time-scale. For an $\epsilon$-suboptimal solution, we first let TSA go through $t$ inner/outer time-scales, and then stay at $t$-th inner time-scale and perform $k_t$ iterations until the first inequality in \eqref{prthm212} is satisfied. In addition from \eqref{prprop36}, the duration $K_t$ in this case satisfies $K_t \le \left\lceil \log_{1-\frac{\ell}{L}} \frac{1}{m} \right\rceil = \widehat{K}$. We then follow \eqref{prthm265} and bound the total number of training sample as
\begin{equation}\label{prthm213}
\begin{split}
N_{TSA} &\le (K_0-\widehat{K}) n_0 + \widehat{K} \frac{m^{t} n_0 -n_0}{m-1}+k_t n_0 m^t.
\end{split}
\end{equation}
where $n_t=n_0m^t$ is used. Similarly, without loss of generality, let $n=n_t = \left\lceil w/({\ell\epsilon})\right\rceil$ for a clear comparison. With $n_0=1$, we then obtain $t = \log_{m}\! \left\lceil\!w/(\ell\epsilon)\! \right\rceil$. To satisfy the first inequality in \eqref{prthm212}, we require
\begin{equation}\label{prthm2135}
\begin{split}
k_t \le \widehat{K} +1 + \left\lceil\log_{1 -\frac{\ell}{L}} \frac{1}{\log_{m} \left\lceil\frac{w}{\ell\epsilon} \right\rceil+1} \right\rceil.
\end{split}
\end{equation}
By substituting \eqref{prthm2135} into \eqref{prthm213}, we have
\begin{align} \label{prthm214}
&N_{TSA}\le  K_0-\widehat{K} + \big( \frac{m\widehat{K}}{m-1}+1+ \left\lceil\log_{1 -\frac{\ell}{L}} \frac{1}{\log_{m} \left\lceil\frac{w}{\ell\epsilon} \right\rceil+1} \right\rceil\big)\left\lceil \frac{w}{\ell \epsilon} \right\rceil.
\end{align}
Then by comparing \eqref{prthm23} with \eqref{prthm214}, we get
\begin{equation}\label{prthm215}
\begin{split}
\gamma & \le \frac{ \frac{m}{m-1} \left\lceil \log_{1-\frac{\ell}{L}} \frac{1}{m} \right\rceil + \left\lceil\log_{1 -\frac{\ell}{L}} \frac{1}{\log_{m}\left\lceil \frac{w}{\ell\epsilon} \right\rceil+1} \right\rceil+1}{ \left \lceil \log_{1-\frac{\ell}{L}}\frac{\epsilon}{2D} \right\rceil} + \mathcal{O}(\epsilon)
\end{split}
\end{equation}
completing the proof.
\end{proof}


\section{Proof of Corollary 2} \label{pr:coro2}

\begin{proof}
By substituting $m=2$ into \eqref{prthm29} and using the fact $2\left\lceil \log_{1-\frac{\ell}{L}} \frac{1}{4} \right\rceil+1 \le \left\lceil \log_{1-\frac{\ell}{L}} \frac{1}{16} \right\rceil+2$, we get
\begin{equation}\label{prcoro13}
\begin{split}
\gamma &\le \frac{\left\lceil \log_{1-\frac{\ell}{L}} \frac{1}{16} \right\rceil+2}{\left\lceil \log_{1-\frac{\ell}{L}}\frac{\epsilon}{2D} \right\rceil} + \mathcal{O}(\epsilon) = \frac{\left\lceil \log_{1-\frac{\ell}{L}} \frac{(L-\ell)^2}{16L^2} \right\rceil}{\left\lceil \log_{1-\frac{\ell}{L}}\frac{\epsilon}{2D} \right\rceil} + \mathcal{O}(\epsilon)
\end{split}
\end{equation}
completing the proof.
\end{proof}


\section{Proof of Theorem 4} \label{pr:thm3}

\begin{proof}
\textbf{The post TSA scheme.} Consider iteration $k$ at $t$-th inner time-scale. From \eqref{nonconvex1}, we have
\begin{equation} \label{prthm31}
\begin{split}
\min_{0 \le i \le {k-1}} \mathbb{E} \left[ \| \nabla F(\bbx_{i})\|_2^2 \right]&\le \frac{2L}{k} \left( F(\bbx_0)-F(\bbx^*) \right) + \sum_{i=0}^{t-1}\frac{K_i w}{k n_i} + \frac{ w}{n_t}.
\end{split}
\end{equation}
By extracting the factor $\sum_{j=0}^{t-1}K_j+1$, \eqref{prthm31} becomes
\begin{align} \label{prthm32}
\min_{0 \le i \le {k-1}} \mathbb{E}\left[ \| \nabla F(\bbx_{i})\|_2^2 \right] &\le \frac{\sum_{j=0}^{t-1}K_j+1}{k} \left( \frac{2L\big( F(\bbx_0)-F(\bbx^*) \big)}{\sum_{j=0}^{t-1}K_j+1} + \sum_{i=0}^{t-2}\frac{K_i}{\sum_{j=0}^{t-1}K_j+1} \frac{w}{n_i} + \frac{K_{t-1}}{\sum_{j=0}^{t-1}K_j+1} \frac{w}{n_{t-1}} \right) + \frac{w}{n_t} \nonumber\\
& \le \frac{2 \big( \sum_{j=0}^{t-1}K_j+1\big)}{k} \frac{w}{n_{t-1}}+ \frac{w}{n_t}.
\end{align}
where the stop criterion \eqref{eq:noninnercond2} and the fact $K_{t-1} \le \sum_{j=0}^{t-1}K_j+1$ is used in the last inequality. The bound of \eqref{prthm32} comprises two terms. For the first term, observe that $Q_1^t$ keeps decreasing while $Q_2^t$ remains constant at each inner time-scale by definition, such that the duration $K_t$ is finite. Then $t \to \infty$ as $k \to \infty$. Therefore, $\lim_{k \to \infty} n_{t-1} = \lim_{t \to \infty} n_{t-1} = \infty$ and $\lim_{k \to \infty}w/n_{t-1}=0$. In addition provided that
\begin{equation}
\begin{split}
\lim_{k \to \infty} \frac{ \sum_{j=0}^{t-1}K_j+1}{k} = \lim_{k \to \infty} \frac{\sum_{j=0}^{t-1}K_j+1}{k_t+\sum_{j=0}^{t-1}K_j} \le 1
\end{split}
\end{equation}
with $k_t \ge 1$, we have
\begin{equation}\label{prthm33}
\begin{split}
\lim_{k \to \infty} \frac{2 \big( \sum_{j=0}^{t-1}K_j+1\big)}{k} \frac{w}{n_{t-1}} = 0.
\end{split}
\end{equation}

For the second term, $\lim_{k \to \infty} n_{t} = \infty$ because $n_t > n_{t-1}$  such that $\lim_{k \to \infty}w/n_t=0$. By substituting this result and \eqref{prthm33} into \eqref{prthm32}, we get $\lim_{k\to \infty}\mathbb{E}\left[ \| \nabla F(\bbx_{k})\|_2^2 \right]=0$.

\textbf{The prior TSA scheme.} By comparing the stop criterions, the prior TSA increases the batch-size faster than the post. As such, the prior converges faster and has exact convergence as well.
\end{proof}


\section{Proof of Theorem 5} \label{pr:prop4}

\begin{proof} We analyze the prior TSA with additive rule \eqref{eq:add} and multiplicative rule \eqref{eq:multi} separately.

\textbf{The additive rule.} Consider iteration $k$ at $t$-th inner time-scale. By substituting \eqref{eq:nonerror_conditions3} into \eqref{eq:noninnercond2}, the stop criterion of non-convex prior TSA at $t$-th inner time-scale is
\begin{equation} \label{eq:prprop41}
\begin{split}
K_t = \max_{k_t} \left\{ \frac{2L}{k_t+\sum_{i=0}^{t-1}K_i} \left( F(\bbx_0)-F(\bbx^*) \right) \ge \frac{w}{n_t} \right\}.
\end{split}
\end{equation}
Let $D = 2L\left( F(\bbx_0)-F(\bbx^*) \right)$ and by using \eqref{nonconvex1} and \eqref{eq:prprop41}, we have
\begin{align} \label{eq:prprop415}
\min_{0 \le i \le {k-1}} \mathbb{E}\left[ \| \nabla F(\bbx_{i})\|_2^2 \right]\!&\le\! \frac{D}{k} \big(1+1+\sum_{i=1}^{t-1}\frac{K_i}{\sum_{j=0}^{i}K_j}+\frac{k_t}{k}\big).
\end{align}
Now consider the duration of inner time-scale $K_t$ in the bound of \eqref{eq:prprop415}. We use the induction to provide lower and upper bounds for it. At $1$-st inner time-scale, by using the stop criterion \eqref{eq:prprop41} and the fact $n_1=n_0+\beta$, we have
\begin{equation} \label{eq:prprop42}
\begin{split}
\left\lfloor \frac{K_0 \beta}{n_0} \right\rfloor \le K_1 \le \left\lceil \frac{(K_0+1) \beta}{n_0} \right\rceil.
\end{split}
\end{equation}
Let $C_L = \lfloor K_0 \beta/n_0 \rfloor$ and $C_U = \lceil (K_0+1) \beta/n_0 \rceil$ be concise notations and assume \eqref{eq:prprop42} holds for $K_{t-1}$. With \eqref{eq:prprop41} and the fact $n_t=n_{t-1}+\beta$, we have
\begin{subequations}\label{eq:prprop43}
\begin{align}
K_t &\ge \left\lfloor \frac{(K_0 + \sum_{i=1}^{t-1}K_i)\beta}{n_0 + (t-1)\beta} \!\right\rfloor \ge \left\lfloor \frac{(n_0 + (t-1)\beta) C_L }{n_0 + (t-1)\beta} \right\rfloor = C_L,\\
K_t &\le \left\lceil \frac{(K_0 + \sum_{i=1}^{t-1}K_i+1)\beta}{n_0 + (t-1)\beta} \right\rceil \le \left\lceil \frac{(n_0 + (t-1)\beta) C_U }{n_0 + (t-1)\beta} \right\rceil = C_U.
\end{align}
\end{subequations}
Therefore, we have $C_L \le K_{t} \le C_U ~\forall~ t=1,2,\ldots$.
By substituting this result into \eqref{eq:prprop415}, we have
\begin{align} \label{eq:prprop45}
\min_{0 \le i \le {k-1}} \mathbb{E}\left[ \| \nabla F(\bbx_{i})\|_2^2 \right]\!& \le \frac{2D}{k} + \frac{DC_U\sum_{i=1}^{t} \frac{1}{i}}{C_Lk}
\end{align}
where the inequality $k_t/k \le K_t/\sum_{i=0}^t K_t$ with $K_t - k_t = \sum_{i=0}^t K_t - k \ge 0$ is used. Since $\sum_{i=1}^{t} 1/i$ is the harmonic series and $K_t \ge C_L$, we have \cite{Atanassov1986}
\begin{align} \label{eq:prprop46}
\sum_{i=1}^{t} \frac{1}{i} = \ln (t+1) + c,~t \le \frac{k-K_0}{C_L}+1
\end{align}
with constant $c$. By substituting \eqref{eq:prprop46} into \eqref{eq:prprop45}, we get
\begin{align} \label{eq:prprop48}
&\min_{0 \le i \le {k-1}} \mathbb{E}\left[ \| \nabla F(\bbx_{i})\|_2^2 \right] \le \frac{\frac{DC_U}{C_L}\ln \big(\frac{k-K_0}{C_L}+2\big) + \frac{DC_Uc}{C_L} c + 2D}{k}.
\end{align}
Therefore, the convergence rate is approximately $\ccalO(\log k / k)$.

\textbf{The multiplicative rule.} Consider iteration $k$ at $t$-th inner time-scale. By using \eqref{eq:prprop41} and the fact $n_t=m^tn_0$, we have
\begin{equation} \label{eq:prprop49}
\begin{split}
\sum_{i=0}^t K_i = \left\lfloor \frac{D m^t n_0}{w} \right\rfloor, ~\forall~t=0,1,\ldots.
\end{split}
\end{equation}
By substituting \eqref{eq:prprop49} into \eqref{eq:prprop415} and using the fact $k_t/k \le K_t/\sum_{i=0}^t K_t$, we have
\begin{align} \label{eq:prprop410}
&\min_{0 \le i \le {k-1}} \mathbb{E}\left[ \| \nabla F(\bbx_{i})\|_2^2 \right] \le \frac{D}{k} \big(1+1+ \sum_{i=1}^{t}\frac{\frac{D m^{i} n_0}{w} -\frac{D m^{i-1} n_0}{w}+1}{\frac{D m^{i} n_0}{w}-1} \big) = \frac{D}{k} \big(1+1+ \sum_{i=1}^{t}\frac{m-1}{m} + \sum_{i=1}^{t}\frac{2-\frac{1}{m}}{\frac{D m^{i} n_0}{w}-1} \big).
\end{align}
The third term $\{ (2-1/m)/\left((Dm^in_0)/w-1\right) \}_{i=1}^\infty$ in the bound of \eqref{eq:prprop410} is a convergent series such that
\begin{equation} \label{eq:prprop411}
\begin{split}
\sum_{i=1}^{t}\frac{2-\frac{1}{m}}{\frac{D m^{i} n_0}{w}-1} \le \sum_{i=1}^{\infty} \frac{2-\frac{1}{m}}{\frac{D m^{i} n_0}{w}-1} \le c
\end{split}
\end{equation}
with constant $c$. By substituting \eqref{eq:prprop411} into \eqref{eq:prprop410}, we get
\begin{equation} \label{eq:prprop412}
\begin{split}
&\min_{0 \le i \le {k-1}} \mathbb{E}\left[ \| \nabla\! F(\bbx_{i})\|_2^2 \right]\! \le \frac{(2+c)D}{k} + \frac{m-1}{m} \frac{Dt}{k}.
\end{split}
\end{equation}
Now according to \eqref{eq:prprop49}, we have $t \le \log_m \frac{(k+1)w}{n_0D}$. By substituting this result into \eqref{eq:prprop412}, we get
\begin{equation} \label{eq:prprop414}
\begin{split}
&\min_{0 \le i \le {k-1}} \mathbb{E}\left[ \| \nabla F(\bbx_{i})\|_2^2 \right] \!\le\! \frac{D \log_m \frac{kw+w}{n_0D}+(2\!+\!c)D}{k}.
\end{split}
\end{equation}
Therefore, the convergence rate is approximately $\ccalO(\log k / k)$.
\end{proof}

\bibliographystyle{IEEEtran}
\bibliography{myIEEEabrv,biblioOp}




\end{document}